\newtheorem{Thm}{Theorem}
\newtheorem{Lem}[Thm]{Lemma}
\newtheorem{Cor}[Thm]{Corollary}
\newtheorem{Claim}[Thm]{Claim}
\newtheorem{Def}[Thm]{Definition}
\def\bull{\vrule height .9ex width .8ex depth -.1ex }
\newenvironment{Proof}{\medbreak
\noindent {\bf Proof:~}}{\unskip\nobreak\hfill\hskip 2em \bull\par\medbreak}
\newcommand{\eat}[1]{}
\newcommand{\rgta}{\ensuremath{\rightarrow}}
\newcommand{\samp}{\ensuremath{\leftarrow}}
\DeclareMathOperator{\wt}{wt}
\DeclareMathOperator{\rk}{rank}
\newcommand{\zo}{\{0,1\}}
\newcommand{\pmo}{\ensuremath \{ \pm 1\}}
 \newcommand{\commented}{no}
\newcommand{\pnote}[1]{\footnote{{\bf [[Parik: {#1}\bf ]] }}}
\newcommand{\snote}[1]{\footnote{{\bf [[Salil: {#1}\bf ]] }}}
\newcommand{\ynote}[1]{\footnote{{\bf [[Yuan: {#1}\bf ]] }}}
\newcommand{\pnote}[1]{}
\newcommand{\snote}[1]{}
\newcommand{\ynote}[1]{}
\newcommand{\ignore}[1]{}
\newcommand{\E}{{\bf E}}
\newcommand{\Rej}{\mathrm{Rej}}
\newcommand{\R}{\mathbb R}
\newcommand{\N}{\mathbb N}
\newcommand{\F}{\mathbb F}
\newcommand{\eps}{\varepsilon}
\newcommand{\polylog}{\mathrm{polylog}}
\newcommand{\Cay}{\mathrm{Cay}}
\newcommand{\bx}{{\bar{x}}}
\newcommand{\by}{{\bar{y}}}
\newcommand{\be}{{\bar{e}}}
\newcommand{\calC}{{\cal C}}
\newcommand{\calE}{{\cal E}}
\newcommand{\calV}{{\cal V}}
\newcommand{\calS}{{\cal S}}
\newcommand{\calT}{{\cal T}}
\newcommand{\calD}{{\cal D}}
\newcommand{\calA}{{\cal A}}
\newcommand{\calB}{{\cal B}}
\newcommand{\calG}{{\cal G}}
\newcommand{\cosC}{{\calV/\calC}}
\newcommand{\bcalE}{{\bar{\calE}}}
\newcommand{\LTC}{Locally Testable Code}
\newcommand{\CG}{Cayley Graph}
\newcommand{\etal}{et al.}
\newcounter{this-list}
\begin{document}

\begin{titlepage}

\title{Locally Testable Codes and Cayley Graphs}
\author{Parikshit Gopalan\thanks{Microsoft Research Silicon Valley. Email: \texttt{parik@microsoft.com}.} \and Salil Vadhan\thanks{School of Engineering and Applied Sciences, Harvard University. E-mail: \texttt{salil@seas.harvard.edu}.  Work done in part when on leave as a Visiting Researcher at Microsoft Research Silicon Valley and a Visiting Scholar at Stanford University.  Supported in part by NSF grant CCF-1116616 and US-Israel BSF grant 2010196.} \and Yuan Zhou\thanks{Computer Science Department, Carnegie Mellon University. E-mail: \texttt{yuanzhou@cs.cmu.edu}. Supported in part by a grant from the Simons Foundation (Award Number 252545). Work done in part when visiting Microsoft Research Silicon Valley.}}

\maketitle

\begin{abstract}
We give two new characterizations of ($\F_2$-linear) locally testable error-correcting
codes in terms of Cayley graphs over $\F_2^h$:
\begin{enumerate}
\item A locally testable code is equivalent to a Cayley graph over $\F_2^h$ whose set of generators is significantly larger than
$h$ and has no short linear dependencies, but yields a shortest-path metric that embeds into $\ell_1$ with constant distortion.
This extends and gives a converse to a result of Khot and Naor
(2006), which showed that codes with large dual distance imply Cayley
graphs that have no low-distortion embeddings into $\ell_1$. 

\item A locally testable code is equivalent to a Cayley graph over
  $\F_2^h$ that has significantly more than $h$ eigenvalues near 1,
  which have no short linear dependencies among them and which ``explain'' all of the large eigenvalues.
This extends and  gives a converse to a recent construction of
Barak et al. (2012), which showed that locally testable codes imply Cayley graphs
that are small-set expanders but have many large eigenvalues.
\end{enumerate}
\end{abstract}
\vfill
\textbf{Keywords:} locally testable codes, Cayley graphs, metric embeddings, spectral graph theory, Fourier analysis.
\thispagestyle{empty}
\end{titlepage}

\section{Introduction}

\newcommand{\Close}{\calD_{\mathrm{close}}}
\newcommand{\Far}{\calD_{\mathrm{far}}}

In this work, we show that {\em locally testable codes} are
equivalent to {\em Cayley graphs} with certain properties, thereby providing 
a new perspective from which to approach long-standing open problems about the
achievable parameters of locally testable codes.

Before describing these results, we review the basics of both locally testable codes
and Cayley graphs.

\subsection{Locally Testable Codes}

Informally, a {\em locally testable code (LTC)} is an error-correcting code in which one can distinguish received words that are in the code from those that are far from the code by a randomized test that probes only a few coordinates of the received word.  Local testing algorithms for algebraic error-correcting codes (like the Hadamard code and the Reed--Muller code) were developed in the literature on program testing~\cite{BlumLuRu93,RubinfeldSu96}, inspired
the development of the field of property testing~\cite{GoldreichGoRo98}, and played
a key role in the constructions of multi-prover interactive proofs and the proof of the PCP Theorem~\cite{BabaiFoLu91,FeigeGoLoSaSz96,BabaiFoLeSz91,AroraSa98,AroraLuMoSuSz98}.  Indeed, they are considered to be the ``combinatorial core'' of PCPs (cf., \cite{GoldreichSu06,BenSassonGoHaSuVa06}), and thus understanding what is possible and impossible with locally testable codes can point the way to a similarly improved understanding of PCPs.  See the surveys \cite{Trevisan04,Goldreich11-ltc,BenSasson10}.  

We focus on the commonly studied case of linear codes over $\F_2$.  Thus a {\em code} is specified by a linear subspace $\calC\subset \F_2^n$. 
$n$ is called the {\em blocklength} of the code, and $k=\dim(\calC)$ is its {\em rate}.  The {\em minimum distance} is 
$d=\min_{x\neq y\in \calC} d(x,y) = \min_{x\in \calC-\{0,\}} |x|$, where $d(\cdot,\cdot)$ denotes Hamming distance and $|\cdot|$ denotes Hamming weight.  

A {\em local tester} for $\calC$ is a randomized algorithm $T$ that,
when given oracle access to a {\em received word} $r\in \F_2^n$, makes
at most a small number $q$ of queries to symbols of $r$ and
accepts or rejects.  If $r\in \calC$, then $T^r$ should accept with high probability (completeness), and if $r$ is ``far'' from $\calC$ in Hamming distance, then $T^r$ should reject with high probability (soundness).  It was shown in \cite{BenSassonHaRa05} 
that any tester for a
linear code can be converted into one with the following structure: the tester $T$ randomly samples a string $\alpha\gets \calD$ and accepts if $\alpha\cdot r = 0$, where $\calD=\calD_T$ is some distribution on the {\em dual code} $\calC^\perp = \{ \alpha\in \F_2^n : \alpha\cdot c=0 \forall\ c\in \calC \}$.  In particular, such a tester has perfect completeness (accepts with probability 1 if $r\in \calC$).  We say the tester (which is now specified solely by $\calD$) has {\em soundness $\delta$} if for every $r\in \F_2^n$, $$\Pr_{\alpha\leftarrow \calD}[\alpha\cdot r=1] \geq \delta \cdot d(r,\calC),$$ where $d(r,\calC)=\min_{c\in \calC} d(r,c)$.  This formulation of soundness is often referred to as {\em strong soundness} in the literature.  
Weaker formulations of soundness in the literature only require that the test reject with good probability when $r$ is sufficiently far from the code.  Typically, 
we want $\delta=\Omega(1/d)$, where $d$ is the minimum distance of the code, so that received words at distance $\Omega(d)$ 
from $\calC$ are rejected with constant probability.  If there are received words at distance $\omega(d)$ from $\calC$, then it is common to cap the 
rejection probability at a constant (e.g. require
$\Pr_{\alpha\leftarrow \calD}[\alpha\cdot r=1] \geq \min\{\delta \cdot
d(r,\calC),1/3\}$), but we ignore this issue in the introduction for simplicity.

We are interested in two parameter regimes for LTCs:
\begin{description}
\item[Asymptotically Good LTCs]  Here we seek rate $k=\Omega(n)$,
  minimum distance $d=\Omega(n)$, soundness
  $\delta=\Omega(1/d)=\Omega(1/n)$, and query complexity $q=O(1)$.   Unfortunately, we do not know whether
such codes exist --- this is the major open problem about LTCs first posed by \cite{GoldreichSu06} , and it
is closely related to the long-standing open question about whether
SAT has constant-query PCPs of linear length (which would enable
proving that various approximation problems require time
$2^{\Omega(n)}$ under the exponential-time hypothesis).  The closest
we have is Dinur's construction~\cite{Dinur07}, which has
inverse-polylogarithmic (rather than constant) relative rate
(i.e. $n=k\cdot\polylog(k)$). A recent result by Viderman additionally
achieves strong soundness \cite{Viderman13}.
    
\item[Constant-distance LTCs]   Here we are interested in codes where the minimum distance $d$ is a fixed constant, and the traditional coding question is
how large the rate $k$ can be as $n\rightarrow \infty$.  BCH codes
have (optimal) rate $k=n-(d/2)\cdot \log n$, but do not have any local
testability properties.  \snote{or are they ``not {\em known} to have
  any local testability properties''}\pnote{By BHR, they cannot be
  local, since their dual distance
  is close to 1/2. Basically any local view (reading n/4 bits) looks like uniformly random bits.}
Reed--Muller codes yield the best known locally testable codes in this regime, with rate $k=n-O((\log n)^{\log d})$ 
and query complexity $q=O(n/d)$ to achieve soundness
$\delta=\Omega(1/d)$~\cite{BKSSZ}.\footnote{This is a case where the
  rejection probabilities need to be capped at a constant, since there
  may be received words at distance $\omega(d)$ from $\calC$.} (This query complexity is optimal, as $\Omega(n/d)$ queries is needed to detect $d/2$ random corruptions to a codeword with constant probability.)  
\snote{is this rate bound accurate as stated, or is more slop needed
  than the $O(\cdot)$, eg should the $O(\cdot)$ go in the exponent?}
\pnote{I think it is good as stated: for degree $m-r$ in $m$ vars, $n
  =2^m$, $n -k = {m \choose \leq r} =O(m^r), d = 2^r$.} An open problem is whether rate
$k=n-c_d\cdot \log n$ is possible, for some constant $c_d$ depending
only on $d$ (but not on $n$).
\end{description}

In terms of limitations of LTCs, there are a number of results
shedding light on the structure of an LTC with good parameters (see
the survey \cite{BenSasson10}), but there are essentially no
nontrivial upper-bounds on rate known for arbitrary $\F_2$-linear
LTCs. 

Instead of bounding the query complexity of our LTCs, it is convenient for us to work with {\em smooth LTCs}, where we simply require that the 
tester does not query any one coordinate too often.  Formally, a tester, specified by a distribution $\calD$ on $\calC^\perp$, is {\em $\eps$-smooth} if for
every $i\in [n]$, $\Pr_{\alpha\gets \calD}[\alpha_i=1] \leq \eps$.    This is analogous to the notion of smooth locally {\em decodable} codes (LDCs) defined by
Katz and Trevisan~\cite{KatzTr00}.  Like in the case of LDCs, bounding smoothness is almost equivalent to bounding query complexity, where query complexity $q$ 
corresponds to smoothness $\Theta(q/n)$ (as would be the case for testers that make $q$ uniformly distributed queries).  \footnote{An arbitrary $q$-query LTC can be converted into one that is $\eps$-smooth by discarding coordinates that are queried with probability more than $\eps$ (and treating them as zero in the tester); the only cost of this transformation is that the distance of the code may decrease by $q/\eps$, so we can set $\eps=O(q/d)$ and lose only a small constant factor in the distance.  Conversely, an $\eps$-smooth tester queries at most $\eps n$ coordinates in expectation, so we can get a tester with query complexity $q=O(\eps n)$ by discarding tests that query more than $q$ coordinates.  However, the latter transformation costs an {\em additive} constant (namely $\eps n/q$) in the soundness probability, and thus does not preserve {\em strong} soundness.}  In particular, we want the smoothness to be $\eps=O(1/n)$ in the asymptotically good regime, and $\eps=O(1/d)$ in the constant-distance regime.

\subsection{Cayley Graphs}

Cayley graphs are combinatorial structures associated with finite groups and are useful for applications ranging from pure group theory 
to reasoning about the mixing rates of Markov chains to explicit constructions of expander graphs \cite{Biggs,HLW}. 
In this paper, we focus on the case that the group is
a finite-dimensional vector space $\calV$ over $\F_2$.  (So $\calV\cong \F_2^h$ for some $h\in \N$.)  Given a multiset $S\subseteq \calV$, the Cayley (multi)graph $\Cay(\calV,S)$ has vertex set $\calV$ and edges $(x,x+s)$ for every $s\in S$ (with appropriate multiplicities if $S$ is a multiset).  Note that this is an $|S|$-regular undirected graph, since every element of $\calV$ is its own additive inverse.  
If we take $S$ to be a basis of $\calV$, then $\Cay(\calV,S)$ is simply the $h$-dimensional hypercube, where $h=\dim \calV$.   We will be interested in the properties of such graphs when $|S|$ is larger than $h$.

\subsection{LTCs and Metric Embeddings of Cayley Graphs}
\label{sec:metric-intro}

Our first result shows that locally testable codes are equivalent to Cayley graphs with
low-distortion metric embeddings into $\ell_1$. We refer the reader to \cite[Chapter 15]{Matousek} for background on metric embeddings.
An {\em embedding} of a metric space $(X_1,d_1)$ into metric space $(X_2,d_2)$ with {\em distortion} $c\geq 1$ is a function $f : X_1\rightarrow X_2$ such that for some $\alpha\in \R^+$ and every $x,y\in X_1$, we have 
$$\alpha \cdot d_1(x,y) \leq d_2(f(x),f(y)) \leq c\cdot \alpha \cdot d_1(x,y).$$
A commonly studied case is when $(X_1,d_1)$ is the shortest-path
metric $d_\calG$ on a graph $\calG$, and $(X_2,d_2)$ is an $\ell_p$
metric.  Indeed, we will take $(X_1,d_1)$ to be the shortest-path
metric on a Cayley graph, and $(X_2,d_2)$ to be an $\ell_1$ metric.
We will use the well-known characterization of $\ell_1$
metrics as the cone of the ``cut metrics'': a finite metric space
$(X_2,d_2)$ is an $\ell_1$ metric if and only if there is a constant
$\alpha\in \R^+$ and a distribution $F$ on boolean functions on $X_2$
such that for all $x,y\in X_2$, $d_2(x,y) = \alpha\cdot \Pr_{f\gets
  F}[f(x)\neq f(y)]$.  

Note that the shortest-path metric on the hypercube
$\Cay(\F_2^h,\{e_1,\ldots,e_h\})$ is an $\ell_1$ metric.  Indeed, this
metric is simply the Hamming distance $d$, and $d(x,y) = n \cdot
\Pr_{i\gets [n]}[x_i\neq y_i]$.    We show that the existence of
locally testable codes is equivalent to being able to approximate this
property (i.e. have low-distortion embeddings into $\ell_1$) even when
the number of generators is noticeably larger than $h$.  To avoid
trivial ways of increasing the number of generators (like duplicating
generators, or taking small linear combinations), we will also require
that the generators are $d$-wise linearly independent (i.e. have no
linear dependency of length smaller than $d$). 

\begin{Thm} \label{thm:metric-intro}
\begin{enumerate}
\item If there is an $\F_2$-linear code of blocklength $n$, rate $k$, and distance $d$ with an $\eps$-smooth local tester of soundness $\delta$, then there is a
Cayley graph $\calG=\Cay(\F_2^h,S)$ such that $|S|=n$, $h=n-k$, $S$ is
$d$-wise linearly independent, and the shortest path metric on $\calG$
embeds into $\ell_1$ with distortion at most
$\eps/\delta$. \label{itm:LTCtoEmbed} 

\item If there is a Cayley graph $\calG=\Cay(\F_2^h,S)$ such that
  $|S|=n$, $S$ is $d$-wise linearly independent, and the shortest path
  metric on $\calG$ embeds into $\ell_1$ with distortion at most $c$,
  then there is an $\F_2$-linear code of blocklength $n$, rate $k = n-h$, and distance $d$
with an $\eps$-smooth local tester of soundness $\delta$, for some
$\delta$ and $\eps$ such that $\eps/\delta\leq
c$. \label{itm:EmbedToLTC} 
\end{enumerate}
\end{Thm}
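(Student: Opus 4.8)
The plan is to run both directions off a single dictionary between an $\F_2$-linear code and a Cayley graph over $\F_2^h$. Given a code $\calC\subseteq\F_2^n$ of dimension $k$, I take a parity-check matrix $H$ of full row rank $h=n-k$ (so $\calC=\ker H$) and let $S=\{s_1,\dots,s_n\}\subseteq\F_2^h$ be the multiset of columns of $H$; conversely, from $S$ I build $H$ with column multiset $S$ and set $\calC=\ker H$, observing that $S$ must span $\F_2^h$ (otherwise $\calG=\Cay(\F_2^h,S)$ is disconnected and has no finite-distortion embedding into $\ell_1$), so that $\dim\calC=n-h$. Three facts are then immediate and will be used throughout: (a) a nonzero codeword of weight $w$ is exactly a linear dependence of length $w$ among the columns of $H$, so $S$ is $d$-wise linearly independent if and only if $\calC$ has minimum distance at least $d$; (b) the shortest-path metric satisfies $d_\calG(u,v)=\rho(u+v)$, where $\rho(z):=\min\{|x|:Hx=z\}$, and since $\{x:Hx=Hr\}=r+\calC$ this coset-leader weight equals $d(r,\calC)$ for any $r$ with syndrome $Hr=z$; (c) every element of $\Cperp$ is $y^\top H$ for a unique $y\in\F_2^h$, and $(y^\top H)\cdot r=y\cdot Hr$ while $(y^\top H)_j=y\cdot s_j$.

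For Part~\ref{itm:LTCtoEmbed}: given a tester $\calD$ on $\Cperp$ with soundness $\delta$ and smoothness $\eps$, I transport it via (c) to a distribution $\calD'$ on $\F_2^h$ and consider the $\ell_1$ cut embedding $g$ built from the linear cuts $v\mapsto y\cdot v$ with $y\gets\calD'$, so $\|g(u)-g(v)\|_1=\Pr_{y\gets\calD'}[y\cdot(u+v)=1]$. The lower bound $\ge\delta\cdot\rho(u+v)=\delta\cdot d_\calG(u,v)$ is exactly soundness at some $r$ with syndrome $u+v$, via (b) and (c); for the upper bound I pick a geodesic $u+v=\sum_{j\in T}s_j$ with $|T|=\rho(u+v)$ and apply a union bound plus smoothness, getting $\Pr_{y\gets\calD'}[y\cdot(u+v)=1]\le\sum_{j\in T}\Pr_{y\gets\calD'}[y\cdot s_j=1]\le\eps\,|T|=\eps\cdot d_\calG(u,v)$. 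Hence $g$ has distortion at most $\eps/\delta$, and $S$ is $d$-wise linearly independent by (a).

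For Part~\ref{itm:EmbedToLTC}: from a distortion-$c$ embedding of $d_\calG$ into $\ell_1$, the cut-metric characterization quoted above yields a distribution $F$ on boolean functions $f:\F_2^h\to\{0,1\}$ and a scale $\delta'>0$ with $\delta'\,d_\calG(u,v)\le\Pr_{f\gets F}[f(u)\ne f(v)]\le c\delta'\,d_\calG(u,v)$ for all $u,v$. The catch is that these cuts are arbitrary, whereas a code tester needs \emph{linear} cuts, i.e.\ characters of $\F_2^h$; I resolve this by symmetrizing over translations. Define $\phi(w):=\Pr_{f\gets F,\ t\gets\F_2^h}[f(t)\ne f(t+w)]$; applying the two-sided bound with $u=t$, $v=t+w$ and averaging over $t$ preserves $\delta'\,\rho(w)\le\phi(w)\le c\delta'\,\rho(w)$. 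A Parseval/autocorrelation computation then shows, for each $f$ and with $F(u):=(-1)^{f(u)}$, that $\Ex_{t}\ind{f(t)\ne f(t+w)}=\sum_y\widehat F(y)^2\,\ind{y\cdot w=1}$, so $Q(y):=\Ex_{f\gets F}\widehat F(y)^2$ is a probability distribution on $\F_2^h$ (it sums to $1$ by Parseval) with $\Pr_{y\gets Q}[y\cdot w=1]=\phi(w)$. Pushing $Q$ forward along $y\mapsto y^\top H$ gives $\calD$ on $\Cperp$; by (b) and (c), $\Pr_{\alpha\gets\calD}[\alpha\cdot r=1]=\phi(Hr)\ge\delta'\,d(r,\calC)$, so the soundness is $\delta=\delta'$, and $\Pr_{\alpha\gets\calD}[\alpha_j=1]=\phi(s_j)\le c\delta'\,\rho(s_j)\le c\delta'$ since $\rho(s_j)\le1$, so the smoothness is $\eps=c\delta'$; thus $\eps/\delta=c$, and the blocklength, rate, and $d$-wise independence are as recorded in the dictionary.

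The expected hard point is the symmetrization in Part~\ref{itm:EmbedToLTC}: an $\ell_1$ embedding of the Cayley graph need not respect the group structure, so one must average over the translation action and then use Parseval to reinterpret the resulting character-based cut distribution as a strong-soundness tester, verifying that averaging costs nothing beyond the distortion $c$ already allowed (it does not, since the bound was pointwise in $t$ before averaging). Part~\ref{itm:LTCtoEmbed} is by contrast routine --- just the dictionary and a union bound along a shortest path. One caveat worth recording: the code produced in the second direction has minimum distance exactly the shortest linear dependence in $S$, hence at least $d$, which is what the statement needs.
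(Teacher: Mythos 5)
Your proposal is correct and takes essentially the same route as the paper: the identical code--Cayley-graph dictionary (with $d_\calG(\bar u,\bar v)=d(r,\calC)$ for any $r$ whose syndrome is $u+v$), the same translation-averaging plus Parseval step to convert an arbitrary cut embedding into one supported on characters (this is exactly the paper's linear-embedding lemma), and the same identification of smoothness and soundness with the maximum (edge) and minimum stretch of the resulting linear embedding. The only cosmetic differences are your parity-check-matrix bookkeeping in place of the paper's coset formulation, and the union bound along a geodesic in Part 1, which is just the triangle inequality the paper invokes to reduce the maximum stretch to edges.
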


Note that the theorem provides an exact equivalence between locally
testable codes and $\ell_1$ embeddings of Cayley graphs, except that
the equivalence only preserves the ratio $\eps/\delta$ rather than the two quantities
separately. It turns out that this ratio is the appropriate parameter
to measure  when considering {\em strong}
soundness.  (See Section~\ref{sec:boosting}.)  For weaker notions of
soundness, we obtain equivalences with weaker notions of
low-distortion embeddings, such as ``single-scale embeddings'' (where we replace the requirement that
$d_2(f(x),f(y))\geq \alpha d_1(x,y)$ with $d_1(x,y)\geq D\Rightarrow
d_2(f(x),f(y))\geq \alpha D$, see \cite{Lee05} and references
therein).  


The theorem specializes as follows for the two main parameter regimes of interest:
\begin{Cor}
There is an asymptotically good smooth LTC with strong soundness ($k=\Omega(n)$, $d=\Omega(n)$, $\eps/\delta=O(1)$) iff there is a Cayley graph $\calG=\Cay(\F_2^h,S)$ with
$|S|=(1+\Omega(1)) h$ such that $S$ is $\Omega(h)$-wise linearly independent and the shortest path metric on $\calG$ embeds into $\ell_1$ with distortion $O(1)$.
\end{Cor}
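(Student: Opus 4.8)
The plan is to read the corollary off directly from Theorem~\ref{thm:metric-intro} by specializing the parameter relations to the asymptotically good regime; there is no new mathematics here, only bookkeeping with the $O(\cdot)$ and $\Omega(\cdot)$ notation. The one structural observation I would isolate up front is that in this regime the code dimension and the hypercube dimension are within a constant factor of each other, i.e.\ $h = n-k = \Theta(n)$: indeed $h \le n$ always, and conversely the Singleton bound gives $d \le n-k+1 = h+1$, so a linear distance $d = \Omega(n)$ forces $h = \Omega(n)$ (equivalently, the rate is bounded away from $1$). With $h = \Theta(n)$ in hand, ``$S$ is $d$-wise linearly independent for some $d = \Omega(n)$'' and ``$S$ is $\Omega(h)$-wise linearly independent'' are literally the same condition, and ``$|S| = (1+\Omega(1))\,h$'' unpacks into the two statements $k = \Omega(n)$ and $h = \Omega(n)$; after this the two implications fall out.

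For the ``only if'' direction I would start from a family of $\F_2$-linear codes with $k \ge c_1 n$, $d \ge c_2 n$, and an $\eps$-smooth tester of soundness $\delta$ with $\eps/\delta \le c_3$, and feed it into part~\ref{itm:LTCtoEmbed} of Theorem~\ref{thm:metric-intro}. That produces $\calG = \Cay(\F_2^h,S)$ with $|S| = n$, $h = n-k$, $S$ being $d$-wise linearly independent, and the shortest-path metric embedding into $\ell_1$ with distortion $\le \eps/\delta \le c_3 = O(1)$. It then remains to rewrite the hypotheses in terms of $h$: from $h = n-k \le (1-c_1)n$ I get $|S| = n \ge h/(1-c_1) = (1+\Omega(1))\,h$, and from $d \le h+1$ together with $d \ge c_2 n$ I get $h = \Theta(n)$ and hence $d \ge c_2 n \ge c_2 h$, i.e.\ $S$ is $\Omega(h)$-wise linearly independent. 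That is the promised Cayley graph.

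For the ``if'' direction I would start from a family $\calG = \Cay(\F_2^h,S)$ with $n := |S|$ satisfying $(1+c_4)\,h \le n$ and $n = \Theta(h)$, with $S$ being $c_5 h$-wise linearly independent, and with the shortest-path metric embedding into $\ell_1$ with distortion $c \le c_6$. Setting $d := \lfloor c_5 h\rfloor$ (so $S$ is $d$-wise linearly independent) and applying part~\ref{itm:EmbedToLTC} of Theorem~\ref{thm:metric-intro} yields an $\F_2$-linear code of blocklength $n$, rate $k = n-h$, distance $d$, with an $\eps$-smooth tester of soundness $\delta$ for which $\eps/\delta \le c \le c_6 = O(1)$. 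Finally $k = n-h \ge n - n/(1+c_4) = \Omega(n)$ and, using $n = \Theta(h)$, $d = \lfloor c_5 h\rfloor = \Omega(h) = \Omega(n)$, so the family is asymptotically good with strong soundness, as required.

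I do not expect a genuine obstacle: the whole proof is an exercise in chasing constants through Theorem~\ref{thm:metric-intro}. The only place that warrants a second's thought is the point flagged in the first paragraph --- that $h$ and $n$ are comparable in this regime --- which is what makes the ``$\Omega(h)$-wise'' phrasing of the corollary equivalent to the ``$\Omega(n)$-wise'' phrasing coming out of the theorem, and which is also implicit in reading ``$|S| = (1+\Omega(1))\,h$'' as asserting $n = \Theta(h)$ (the produced graph in the forward direction does satisfy this, again because a linear distance bounds the rate away from $1$). I would also remind the reader, as is noted in the discussion right after Theorem~\ref{thm:metric-intro}, that it is precisely the ratio $\eps/\delta$, not $\eps$ and $\delta$ individually, that corresponds to strong soundness, which is why the corollary is stated in terms of that ratio.
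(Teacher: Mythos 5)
Your proposal is correct and matches the paper's intent: the corollary is stated there as a direct specialization of Theorem~\ref{thm:metric-intro}, and your argument is exactly that specialization with the parameter bookkeeping (including the observation that $d=\Omega(n)$ forces $h=n-k=\Theta(n)$, so ``$\Omega(n)$-wise'' and ``$\Omega(h)$-wise'' independence coincide) written out explicitly. No gaps.
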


\begin{Cor}
For a constant $d$, there is a distance $d$ LTC of blocklength $n$ with rate $k=n-c_d\log n$ and $\eps/\delta=O(1)$ iff there is a Cayley graph $\calG=\Cay(\F_2^h,S)$
with $|S|=2^{h/c_d}$ such that $S$ is $d$-wise linearly independent and the shortest path metric on $\calG$ embeds into $\ell_1$ with distortion $O(1)$.
\end{Cor}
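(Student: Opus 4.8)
The plan is to derive this corollary directly from Theorem~\ref{thm:metric-intro} by specializing its parameters to the constant-distance regime, where $d=O(1)$ and the rate deficit $h=n-k$ is logarithmic in $n$. The crux of the matching is the change of variables $h=c_d\log n$, which is the same as $n=2^{h/c_d}$; under this substitution the two halves of Theorem~\ref{thm:metric-intro} translate term-by-term into the two directions of the corollary, so no new ideas are needed beyond the bookkeeping.

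For the forward direction I would start from a distance-$d$ code of blocklength $n$ and rate $k=n-c_d\log n$ carrying an $\eps$-smooth tester of soundness $\delta$ with $\eps/\delta=O(1)$, and apply part~\ref{itm:LTCtoEmbed} of Theorem~\ref{thm:metric-intro}. This immediately yields a Cayley graph $\Cay(\F_2^h,S)$ with $h=n-k=c_d\log n$, with $|S|=n$, with $S$ being $d$-wise linearly independent, and whose shortest-path metric embeds into $\ell_1$ with distortion at most $\eps/\delta=O(1)$; it then only remains to rewrite $|S|=n$ as $2^{h/c_d}$ using $\log n=h/c_d$. For the converse I would take a Cayley graph $\Cay(\F_2^h,S)$ with $|S|=2^{h/c_d}$, with $S$ being $d$-wise linearly independent, and with shortest-path metric embedding into $\ell_1$ with distortion $c=O(1)$; setting $n:=|S|$ so that $h=c_d\log n$, part~\ref{itm:EmbedToLTC} of Theorem~\ref{thm:metric-intro} produces an $\F_2$-linear code of blocklength $n$, distance $d$, and rate $k=n-h=n-c_d\log n$, equipped with an $\eps$-smooth tester of soundness $\delta$ obeying $\eps/\delta\le c=O(1)$.

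I do not expect a genuine obstacle here, since the corollary is a pure instantiation of the theorem; the one subtlety worth flagging is that Theorem~\ref{thm:metric-intro} controls only the ratio $\eps/\delta$ and not $\eps$ and $\delta$ individually, which is exactly why the corollary is phrased with $\eps/\delta=O(1)$ rather than with separate smoothness and soundness bounds. For constant $d$ this loss is harmless: the natural targets $\eps=O(1/d)$ and $\delta=\Omega(1/d)$ are each $\Theta(1)$, so tracking the single parameter $\eps/\delta$ is both the natural and the sufficient accounting. (As in the introduction, I would also ignore the capping of rejection probabilities, which does not affect the $\eps/\delta$ bookkeeping.)
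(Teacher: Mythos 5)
Your proposal is correct and matches the paper's treatment: the corollary is stated there as an immediate specialization of Theorem~\ref{thm:metric-intro} under the change of variables $h=n-k=c_d\log n$ (equivalently $n=|S|=2^{h/c_d}$), applying Item~\ref{itm:LTCtoEmbed} for the forward direction and Item~\ref{itm:EmbedToLTC} for the converse, exactly as you do. Your remark that only the ratio $\eps/\delta$ is preserved is also the paper's own caveat (addressed separately in its Section~\ref{sec:boosting}), so nothing is missing.
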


To interpret the theorem, let's consider what the conditions on the Cayley graph $\calG$ mean.  The condition that $|S|=n$ and the elements of $S$ are
$d$-wise independent means that locally, in balls of radius $d$, the
graph $\calG$ looks like the $n$-dimensional hypercube (which embeds
into $\ell_1$ with no distortion).  However, it is squeezed into  a
hypercube of significantly lower dimension $h$ (which may make even
constant distortion impossible). 

The canonical example of graphs that do not embed well into $\ell_1$ are expanders.  Specifically, an $n$-regular expander on $H=2^h$ vertices with all nontrivial eigenvalues bounded away
from 1 requires distortion $\Omega(h/\log n)$ to embed into $\ell_1$.  Roughly speaking, the reason is that by Cheeger's Inequality (or the Expander Mixing Lemma),
cuts cannot distinguish random neighbors in the graph from random and independent pairs of vertices in the graph, and random pairs of vertices 
are typically at distance $\Omega(h/\log n)$.\footnote{Actually, for Cayley graphs over $\F_2$ vector spaces, the bound can be improved to
$\Omega(h/\log(n/h))$, using the fact that there are at most ${n \choose t}$ (rather than $n^t$) vertices at distance $t$ from any given vertex.}

Thus, saying that a graph $\calG$ embeds into $\ell_1$ with constant distortion intuitively means that $\calG$ is very far from being an expander.  
More precisely, to prove
the nonexistence of an $\ell_1$ embedding of distortion $c$ amounts to exhibiting a distribution $\Close$ on edges of $\calG$ and a distribution $\Far$
on pairs of vertices in $\calG$ such that for every cut $f : \F_2^h\rightarrow \zo$, 
$$\frac{\Pr_{(x,y)\gets \Close}[ f(x)\neq f(y)]}{c} > \frac{\Pr_{(x,y)\gets \Far}[f(x)\neq f(y)]}{\E_{(x,y)\gets \Far}[d_\calG(x,y)]}.$$
As discussed above, if $\calG$ were an expander, we could take $\Close$ to be the uniform distribution on edges and $\Far$ to be the uniform distribution on pairs of vertices,
and deduce a superconstant lower bound on $c$.   Showing an
impossibility result for LTCs amounts to finding
such expander-like distributions $\Close$ and $\Far$ in an arbitrary Cayley graph with a large (size $n$) set of
$d$-wise linearly independent generators $S$. 

Our construction of a Cayley graph from an LTC in
Item~\ref{itm:LTCtoEmbed} of Theorem~\ref{thm:metric-intro} is a
``quotient of hypercube''  construction previously analyzed by Khot
and Naor~\cite{KhotNaor}.  Specifically, they showed that if we start
from a code $\calC$ whose dual  code $\calC^\perp$ has large minimum
distance, then the resulting Cayley graph $\calG$ requires large
distortion to embed into $\ell_1$. Our contributions are to show that we can replace the 
hypothesis with the weaker condition that $\calC$ is not locally
testable, and to establish a tight converse by constructing LTCs from Cayley graphs with low-distortion embeddings.

\subsection{LTCs and Spectral Properties of Cayley Graphs.}  In our
second result, we show that locally testable codes are equivalent to
Cayley graphs with spectral properties similar to the ``$\eps$-noisy
hypercube''. We call such graphs derandomized hypercubes. 

For Cayley graphs over $\F_2$ vector spaces (and more generally abelian groups), the spectrum can be described quite precisely using Fourier analysis.  Let $M$ be the transition matrix for the random walk on $\Cay(\calV,S)$, i.e. the adjacency matrix divided by $|S|$.  Then, regardless of the choice of $S$, the eigenvectors of $M$ are exactly of the form $\chi_b(x) = (-1)^{b(x)}$ where $b : \calV\rightarrow \F_2$ ranges over all $\F_2$-linear functions.  (If we pick a basis so that $\calV=\F_2^h$, then each such linear function is of the form $b(x) = \sum_i b_ix_i$.)  The 
eigenvalue of $M$ associated with $\chi_b$ is 
$(1/|S|)\cdot \sum_{s\in S} \chi_b(s)$. In particular, if $S$ is a
$\lambda$-biased space~\cite{NaorNa93} for $\lambda$ bounded away from
1, then all the nontrivial eigenvalues have magnitude at most
$\lambda$, and hence the graph $\Cay(\calV,S)$ is an expander.  In
contrast, for the case of the hypercube ($S=\{e_1,\ldots,e_h\}$ for a
basis $e_1,\ldots,e_h$ of $\calV$), the eigenvalue associated with
$b=(b_1,\ldots,b_h)$ is $1-2|b|/h$ where $|b|$ is the Hamming weight
of $b$, so there are ${h \choose i}$ eigenvalues of value $1-2i/h$.   

In this section, it will be useful to generalize the notion of Cayley
graph from multisets to distributions over $\calV$.   If $S$ is a
distribution over $\calV$, then $\Cay(\calV,S)$ is a weighted graph
where we put weight $\Pr[S=s]$ on the edge $(x,x+s)$ for every $x,s\in
\F_2^h$. $\Pr[S=s]$ is also the $(x,x+s)$ entry of the transition
matrix of the random walk on $\Cay(\calV,S)$.  Now, the eigenvalues
are $\lambda(b) = \E_{s\gets S}[\chi_b(s)]$.  Here a useful example is
the $\eps$-noisy hypercube, where $\calV = \F_2^h$ and
$S=(S_1,\ldots,S_h)$ has each coordinate independently set to 1 with
probability $\eps$, and hence the eigenvalues are $\lambda(b) =
(1-2\eps)^{|b|}$. 

Neither the hypercube nor the $\eps$-noisy hypercube are very good expanders, as they have eigenvalues of $1-2/h$ and $1-2\eps$, respectively, corresponding 
to eigenvectors $\chi_b$ with $|b|=1$ (which in turn correspond to the
``coordinate cuts,'' partitioning $\F_2^h$ into the sets $\{x :
x_i=1\}$ and $\{x : x_i=0\}$).  However, their spectral properties do
imply that small sets expand well.  Indeed, Kahn, Kalai, and
Linial~\cite{KahnKaLi88} showed that the indicator vectors of
``small'' sets in $\F_2^h$ are concentrated on the eigenvectors
$\chi_b$ where $|b|$ is large, and hence small sets expand well in
both the hypercube and $\eps$-noisy hypercube (where ``small'' is
$|\calV|^{1-\Omega(1)}$ in the case of the hypercube, and $o(|\calV|)$
in the case of the $\eps$-noisy hypercube).

Our spectral characterization of locally testable codes is as follows.

\begin{Thm} \label{thm:spectrum-intro}
There is an $\F_2$-linear code of blocklength $n$, rate $k$, and distance $d$ with an $\eps$-smooth local tester of soundness $\delta$
if and only if there is a
Cayley graph $\calG=\Cay(\F_2^h,\calD)$ (for some distribution $\calD$ on $\F_2^h$) and a set $S=\{b_1,\ldots,b_n\}$ of linear maps
$b_i : \F_2^h\rightarrow \F_2$ satisfying:
\begin{enumerate}
\item $h=n-k$
\item $S$ is $d$-wise linearly independent
\item $\lambda(b_i)\geq 1-2\eps$ for $i=1,\ldots,n$.
\item For every linear map $b : \F_2^h\rightarrow \F_2$, $\lambda(b)
  \leq 1-2\delta\cdot \rk_S(b)$, where $\rk_S(b) = \min\{|T| :
  T\subseteq S, b=\sum_{i\in T} b_i\}$.\snote{changed $d_S$ to $\rk_S$
    for consistency with the text, though I don't quite see why we
    should think of this as a kind of ``rank''}\pnote{I called it rank
    in analogy with the rank of a quadratic form. Generators (products
    of linear forms) have rank 1, and for others it is the smallest
    $k$so that it is the sum of k generators. But there I think it
    corresponds to the rank of some matrix derived from the quadratic, so maybe not
    the best analogy. Using $d_S$ seems a bit strange also since
    $d_S(s) =1$ for all $s \in S$. Any better ideas? How about
    $\deg_S$? Its a macro, and
    easy to change}
\label{itm:spectral-soundness}
\end{enumerate}
\end{Thm}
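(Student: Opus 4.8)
The plan is to set up a literal dictionary between codes-with-testers and Cayley-graphs-with-generators that makes both directions of the equivalence a single computation. Under the dictionary, $\F_2^h$ corresponds to the dual code $\calC^\perp$; the distribution $\calD$ defining the Cayley graph corresponds to the tester's distribution on $\calC^\perp$; the functional $b_i : \F_2^h \to \F_2$ corresponds to the $i$-th coordinate functional $\alpha \mapsto \alpha_i$ on $\calC^\perp$; an arbitrary linear map $b : \F_2^h \to \F_2$ corresponds to a coset $r + \calC$ of $\calC$ in $\F_2^n$; the quantity $\rk_S(b)$ corresponds to the coset-leader weight $d(r,\calC) = \min\{|w| : w \in r+\calC\}$; and the eigenvalue $\lambda(b) = \E_{s \gets \calD}[(-1)^{b(s)}] = 1 - 2\Pr_{s \gets \calD}[b(s)=1]$ corresponds to $1 - 2\Pr_{\alpha \gets \calD}[\alpha\cdot r = 1]$. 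Items (1)--(4) then translate respectively into the rate relation, the minimum distance of $\calC$, $\eps$-smoothness of the tester, and strong soundness $\delta$; the two constructions below are mutually inverse, so no parameter is lost.

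For the ``only if'' direction, start from $\calC \subseteq \F_2^n$ of dimension $k$ with an $\eps$-smooth, strong-soundness-$\delta$ tester given by a distribution $\calD$ on $\calC^\perp$ (such a form is guaranteed by the reduction of~\cite{BenSassonHaRa05}). Put $\calV = \calC^\perp$, so $h := \dim\calV = n-k$, let $\calG = \Cay(\calV,\calD)$, and for $i \in [n]$ let $b_i(\alpha) = \alpha_i$, $S = \{b_1,\dots,b_n\}$. Item (1) is immediate. A linear dependence $\sum_{i \in T} b_i = 0$ on $\calC^\perp$ says $\mathbf{1}_T \in (\calC^\perp)^\perp = \calC$, so the smallest such nonempty $|T|$ is the minimum distance $d$ of $\calC$; this gives Item (2). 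The eigenvalue formula gives $\lambda(b_i) = 1 - 2\Pr_{\alpha \gets \calD}[\alpha_i = 1] \ge 1 - 2\eps$ by smoothness, which is Item (3). For Item (4), write a linear $b : \calC^\perp \to \F_2$ as $\alpha \mapsto r\cdot\alpha$ for some $r \in \F_2^n$ (unique modulo $\calC$); then $b = \sum_{i \in T}b_i$ iff $r - \mathbf{1}_T \in \calC$ iff $\mathbf{1}_T \in r+\calC$, so $\rk_S(b) = \min\{|w| : w \in r+\calC\} = d(r,\calC)$, and combining $\lambda(b) = 1 - 2\Pr_{\alpha \gets \calD}[r\cdot\alpha = 1]$ with strong soundness $\Pr_{\alpha \gets \calD}[r\cdot\alpha = 1] \ge \delta\cdot d(r,\calC)$ yields $\lambda(b) \le 1 - 2\delta\cdot\rk_S(b)$.

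For the ``if'' direction, reverse this. If $\delta = 0$ the claim is vacuous, so assume $\delta > 0$; then Item (4) forces $\rk_S(b) < \infty$ for every linear $b : \F_2^h \to \F_2$ (otherwise the right-hand side is $-\infty < \lambda(b)$), i.e.\ $S$ spans the space of all linear maps $\F_2^h \to \F_2$, so $B : \F_2^h \to \F_2^n$, $x \mapsto (b_1(x),\dots,b_n(x))$, is injective. Set $\calC^\perp := B(\F_2^h)$ (dimension $h$), $\calC := (\calC^\perp)^\perp$ (dimension $n-h = k$, blocklength $n$), and let the tester be the distribution that samples $x \gets \calD$ and outputs $B(x) \in \calC^\perp$. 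Reading the identities above backwards: the tester's $i$-th coordinate is $1$ with probability $\Pr_{x \gets \calD}[b_i(x)=1] = (1-\lambda(b_i))/2 \le \eps$, so it is $\eps$-smooth; a nonzero $w \in \calC$ is exactly a nonempty linear dependence $\sum_{i : w_i = 1}b_i = 0$, which has size $\ge d$ by Item (2), so $\calC$ has minimum distance at least $d$; and for $r \in \F_2^n$, putting $b_r = \sum_i r_i b_i$ we have $\rk_S(b_r) = d(r,\calC)$ exactly as before, so the rejection probability on $r$ is $\Pr_{x \gets \calD}[b_r(x)=1] = (1-\lambda(b_r))/2 \ge \delta\cdot d(r,\calC)$ by Item (4), i.e.\ strong soundness $\delta$.

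The argument is essentially bookkeeping; the only two points that need a moment's thought are (a) the observation that strong soundness with $\delta > 0$ already forces the generators to span, so that $B$ is injective and the dimension count is exact (without it one would only get a code of rate $\ge k$ after a quotient), and (b) the identity $\rk_S(b) = d(r,\calC)$ equating the minimal number of generators summing to $b$ with the minimum Hamming weight in the coset $r+\calC$. Everything else is the single formula $\lambda(b) = 1 - 2\Pr_{s\gets\calD}[b(s)=1]$ applied in one direction or the other.
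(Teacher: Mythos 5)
Your proposal is correct and follows essentially the same route as the paper: the Cayley graph on $\calC^\perp$ with the tester as edge distribution (and its inverse via the evaluation map $x\mapsto(b_1(x),\ldots,b_n(x))$), the identity $\lambda(b)=1-2\Pr_{s\gets\calD}[b(s)=1]$ matching eigenvalues to rejection probabilities, and the identification of $\rk_S(b)$ with the coset distance $d(r,\calC)$, which is exactly the paper's Lemma~\ref{lem:ltc-coset} combined with Theorem~\ref{thm:short-code} and its converse (with $(\mu,\nu)=(2\eps,2\delta)$). Your extra remark that item~4 with $\delta>0$ already forces $S$ to span $(\F_2^h)^*$ is a nice touch, since the paper instead builds the spanning requirement into its definition of a spectrum generator.
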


Let's compare these properties with those of the $\eps$-noisy
hypercube.  Recall that, in the $\eps$-noisy hypercube, the coordinate
cuts $S=\{e_1,\ldots,e_h\}$ are linearly independent and all
give eigenvalues of $1-2\eps$.  And for every $b$, $\lambda(b) = (1-2\eps)^{|b|} = (1-2\eps)^{\rk_S(b)} = 1-\Omega(\eps\cdot \rk_S(b))$ (provided $\rk_S(b)\leq O(1/\eps)$). 

Like in our metric embedding result, the main difference here is that
we are asking for the set $S$ to be of size larger than $h$ (while
retaining $d$-wise independence among the generators), so we need to squeeze many large
eigenvalues into a low-dimensional space.  One reason that these
spectral properties are interesting is that they imply that the graph
$\calG$ is a small-set expander for sets of size $|\calV|/\exp(d)$(see Lemma \ref{lem:hypercon-sse}).   

One direction of the above theorem (from LTCs to Cayley graphs) is
extracted from the work of Barak et al.~\cite{BGH+12}, who used
locally testable codes (in the constant distance regime) to construct
small-set expanders that have a large number of large eigenvalues (as
a function of the number $H=2^h$ of vertices).  Such graphs provide
barriers to improving the analysis of the Arora--Barak--Steurer algorithm
for approximating small-set expansion and unique games~\cite{ABS10},
and were also used by Barak et al.~\cite{BGH+12} to construct improved
integrality gap instances for semidefinite programming relaxations of
the unique games problem.  Our contribution is showing that the
connection can be reversed, when formulated appropriately
(in terms of spectral properties rather than small-set expansion).

We can specialize Theorem~\ref{thm:spectrum-intro} to the two
parameter regimes of interest to us (see Corollaries
\ref{cor:sse-asymptotic} and \ref{cor:sse-constant} in Section
\ref{sec:sse-cons} for precise statements). The existence of
asymptotically good smooth LTCs with strong soundness is equivalent to
the existence of Cayley graphs whose eigenvalue spectrum resembles the
$n$-dimensional Boolean hypercube (for eigenevalues in the range $[0.5,1]$) but where the
number of vertices is $2^{(1- \Omega(1))n}$. In the constant $d$
regime, the existence of $[n, n -c_d\log n, d]_2$ LTCs blocklength $n$
with smoothness $\eps=O(1/d)$, and soundness $\delta=\Omega(1/d)$  is
equivalent to the existence of Cayley graphs whose eigenvalue spectrum
resembles the $n$-dimensional Boolean hypercube (for eigenvalues in the range $[0.5,1]$) but where the
number of vertices is $n^{c_d}$.  \snote{added corollaries, but I'm
  not sure how much they add for the reader at this point...  they are
  not quite as concise as the corollaries of the metric embedding
  result.}\pnote{Agreed. Moving them out to the suburbs, replaced with
  more vague but easy to read statements.}

\eat{
\begin{Cor}
There is an asymptotically good smooth LTC with strong soundness ($k=\Omega(n)$, $d=\Omega(n)$, $\eps=O(1/n)$, $\delta=\Omega(1/n)$) iff there is a Cayley graph $\calG=\Cay(\F_2^h,\calD)$ and a set $S$ of linear maps $b : \F_2^h\rightarrow \F_2$
such that 
\begin{enumerate}
\item $|S|=(1+\Omega(1))\cdot h$,
\item $S$ is $\Omega(h)$-wise linearly independent, 
\item $\lambda(b) \geq 1-O(1/h)$ for every $b\in S$, and 
\item For every
linear map $b : \F_2^h\rightarrow \F_2$, we have $\lambda(b) \leq 1-\Omega(\rk_S(b)/h)$.
\end{enumerate}
\end{Cor}

\begin{Cor}
For a constant $d$, there is a distance $d$ LTC of blocklength $n$ with rate $k=n-c_d\log n$, smoothness $\eps=O(1/d)$, and soundness $\delta=\Omega(1/d)$ iff there is a Cayley graph $\calG=\Cay(\F_2^h,\calD)$ and a set $S$ of linear maps $b : \F_2^h\rightarrow \F_2$ such that
\begin{enumerate}
\item $|S|=2^{h/c_d}$,
\item $S$ is $d$-wise linearly independent, 
\item $\lambda(b) \geq 1-O(1/d)$ for every $b\in S$, and 
\item For every
linear map $b : \F_2^h\rightarrow \F_2$, we have $\lambda(b) \leq 1-\Omega(\rk_S(b)/d)$.
\end{enumerate}
\end{Cor}}

Like our metric embedding result, Theorem~\ref{thm:spectrum-intro} and its corollaries have analogues for weaker notions of soundness for the locally testable codes.  Specifically,
Item~\ref{itm:spectral-soundness} changes in a way that is analogous
to the soundness condition, for example only requiring that
$\lambda(b)$ is small when $\rk_S(b)$ is large.

\subsection{Perspective}

For many of the problems about constructing codes or Cayley graphs
studied in theoretical computer science, the main challenge is finding
an {\em explicit} construction.  Indeed, we know that a randomly
chosen code has good rate and distance and that a randomly chosen set
of generators yields a Cayley graph with high expansion, and much of
the research on these topics is aimed at trying to match these
parameters with efficient deterministic algorithms. 

Locally testable codes (and the equivalent types of Cayley graphs that
we formulate) are intriguing in that they combine properties of random
objects (such as large distance) with very non-random properties (the
existence of a local tester).  Thus the major open questions (such as
whether there are asymptotically good LTCs) are {\em existential} ---
do there even exist objects with the given parameters, regardless of
the complexity of constructing them?   

Our hope is that the alternative characterizations developed in this
paper will be useful in approaching some of these existential
questions, either positively (e.g. by using graph operations to
construct Cayley graphs with the properties discussed above,
analogously to Meir's construction of LTCs~\cite{Meir09}) or
negatively (e.g. by reasoning about expander-like subgraphs of Cayley
graphs, as discussed above in Section~\ref{sec:metric-intro}).  

\snote{say something about applications?}
\pnote{New para:}
The connection between metric embeddings and local testability gives a
new perspective on existing results in this area, for instance we use
it to give a simple LTC-based proof of the non-embeddability result of
Khot and Naor \cite{KhotNaor} (see Section
\ref{sec:reform}). Similarly, the connection to derandomized
hypercubes has been used by ~\cite{BGH+12,KaneMeka} to construct improved
integrality gap instances for semidefinite programming relaxations of
combinatorial optimization problems.

\section{Locally Testable Codes revisited}

In this section we reformulate the properties
of \LTC s in terms of cosets,  which makes our equivalences easier to show.

Recall that a local tester for an $[n,k,d]_2$ binary linear code
$\calC$ is specified by a a distribution $\calD$ on $\calC^\perp$.
The tester $\calD$ is $\eps$-smooth if for
every $i \in [n]$, $\Pr_{\alpha \samp \calD}[\alpha_i =1] \leq \eps$.
For $v \in \F_2^n$ let  $d(v,\calC) = \min_{c \in \calC}d(v,c)$ and $\Rej(v,\calD)  = \Pr_{\alpha \samp \calD}[\alpha \cdot v =1]$. 
We say that $\calD$ has soundness $\delta$ if $\Rej(v,\calD) \geq
\delta d(v,\calC)$ for all $v \in \F_2^n$. We say that an $[n,k,d]_2$ linear code is $(\eps,\delta)$-locally
testable if it has a tester $\calD$ which has smoothness $\eps$ and
soundness $\delta$.  By considering received words at distance
$1$ from the code,  we get $\delta \leq \Rej(e_i,\calD) \leq \eps$.
The upper bound is an easy consequence of the smoothness. Ideally, we want $\delta = \Omega(\eps)$.

Given $v \in \calV$, let $\bar{v} \in \cosC$ denote the coset
of $\calC$ containing it. Let $\bar{\calE} =
\{\bar{e}_1,\ldots,\bar{e}_n\}$ denote the coset representatives of
the basis vectors $\{e_1,\ldots,e_n\}$. The $\bar{e_i}$s are not independent over $\F_2$, indeed we have
\begin{align*}
\sum_{i \in S}\bar{e}_i = 0 \iff \sum_{i\in S} e_i \in \calC
\end{align*}
Hence the shortest non-trivial linear dependence is of weight exactly
$d$.

For $\bar{v} \in \cosC$, there could be several ways to write
it as a linear combination over $\bcalE$. We have
\begin{align*}
\bar{v} = \sum_{i \in S}\bar{e}_i\iff v + \sum_{i\in S} e_i \in \calC
\end{align*}
Hence if we define $d(\bar{v},\calC) = d(v,\calC)$ for any $v \in
\bar{v}$ (the exact choice does not matter), it follows that
$d(\bar{v},\calC) = \rk_{\bar{\calE}}(v)$. Similarly, for every  $c \in \calC$, $v \in \calV$ and $\alpha \in
\calC^\perp$, $\alpha(v) = \alpha(v +c)$. Hence
\begin{align}
\label{eq:rej-coset}
\Rej(v,\calD) = \Rej(v + c,\calD)
\end{align}
This lets us define $\Rej(\bar{v},\calD) = \Rej(v,\calD)$ for any $v \in
\bar{v}$.

We can now rephrase smoothness and soundness in terms of
coset representatives.
\begin{align}
\label{eq:smooth-rej}
\Pr_{\alpha \samp \calD}[\alpha_i =1] = \Pr_{\alpha \samp
  \calD}[\alpha (e_i) =1] = \Rej(\bar{e_i},\calD)
\end{align}
Thus $\calD$ is $\eps$-smooth if every $\bar{e_i}$ is rejected with probability at most $\eps$.

We say a set $S$  of vectors in an $\F_2$-linear space is $d$-wise independent
if every $T \subseteq S$ where $|T| < d$ is linearly independent over $\F_2$. 
For a set of vectors $S = \{s_1,\ldots,s_n\}$ which span a space
$\calT$, we use $\rk_S(t)$ for $t \in \calT$ to denote the smallest $k$
such that $t$ can be expressed as the sum of $k$ vectors
from $S$.  With this notation, $\calD$ has soundness $\delta$ if for every $\bar{v} \in
\cosC$ such that $\rk_{\bar{\calE}} \geq d'$, $\Rej(\bar{v}) \geq
\delta d'$.

We summarize these observations in the following lemma:

\begin{Lem}
\label{lem:ltc-coset}
Let $\calC$ be an $[n,k]_2$ code and let $\calD$ be a tester for $\calC$.
\begin{itemize}
\item $\calC$ has distance $d$ iff the set $\bcalE$ is $d$-wise
  independent.
\item The tester $\calD$ is $\eps$-smooth iff $\Rej(\bar{e_i},\calD)
  \leq \eps$ for all $i \in [n]$.
\item For $\bar{v} \in \cosC$, $d(\bar{v},\calC) = \rk_{\bar{\calE}}(\bar{v})$. Hence $\calD$ has soundness $\delta$ iff for every $\bar{v} \in
\cosC$,
$$\Rej(\bar{v},\calD) \geq \delta \cdot \rk_\bcalE(\bar{v})$$
\end{itemize}
\end{Lem}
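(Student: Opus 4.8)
The plan is to verify the three items one at a time, each an immediate consequence of the coset correspondences established just above the lemma. For the first item, I would use the equivalence $\sum_{i\in S}\bar e_i = 0 \iff \sum_{i\in S} e_i \in \calC$: a subset $\{\bar e_i : i\in S\}$ is $\F_2$-linearly dependent precisely when $\sum_{i\in S'} e_i \in \calC$ for some nonempty $S'\subseteq S$, i.e.\ when $\calC$ contains a nonzero codeword supported inside $S$. Hence $\bcalE$ fails to be $d$-wise independent exactly when $\calC$ has a nonzero codeword of weight less than $d$, so $\bcalE$ is $d$-wise independent iff $\calC$ has minimum distance at least $d$; together with the fact that $\bcalE$ has a dependency of weight equal to the minimum distance, this gives the stated ``iff''.

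For the second item, I would note that for $\alpha\in\calC^\perp$ the bit $\alpha(e_i)$ depends only on the coset $\bar e_i$ (by \eqref{eq:rej-coset}, or directly since $\alpha$ vanishes on $\calC$), so $\Pr_{\alpha\samp\calD}[\alpha_i=1] = \Pr_{\alpha\samp\calD}[\alpha(e_i)=1] = \Rej(\bar e_i,\calD)$ as recorded in \eqref{eq:smooth-rej}. Thus the smoothness condition $\Pr_{\alpha\samp\calD}[\alpha_i=1]\le\eps$ for all $i\in[n]$ is literally the condition $\Rej(\bar e_i,\calD)\le\eps$ for all $i$.

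For the third item, the correspondence $\bar v = \sum_{i\in S}\bar e_i \iff v + \sum_{i\in S} e_i\in\calC$ shows that the least $|S|$ expressing $\bar v$ over $\bcalE$ equals $\min_{c\in\calC}|v+c| = d(v,\calC)$, a quantity independent of the choice of representative $v\in\bar v$; this is exactly the claim $\rk_\bcalE(\bar v) = d(\bar v,\calC)$. Since $\Rej(\bar v,\calD) = \Rej(v,\calD)$ is also independent of the representative, the soundness requirement ``$\Rej(v,\calD)\ge\delta\,d(v,\calC)$ for all $v\in\F_2^n$'' translates verbatim into ``$\Rej(\bar v,\calD)\ge\delta\,\rk_\bcalE(\bar v)$ for all $\bar v\in\cosC$''.

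There is no substantive obstacle here: the lemma is a repackaging of identities already derived, and the only point needing (minor) care is checking that $\Rej(\cdot,\calD)$ and $d(\cdot,\calC)$ descend to well-defined functions on $\cosC$ — which is precisely what \eqref{eq:rej-coset} and the two displayed equivalences provide.
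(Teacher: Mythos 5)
Your proposal is correct and follows essentially the same route as the paper, which states the lemma as a summary of exactly these coset observations (the dependency criterion $\sum_{i\in S}\bar e_i=0 \iff \sum_{i\in S}e_i\in\calC$, Equation \eqref{eq:smooth-rej}, and the representative-independence of $\Rej$ and $d(\cdot,\calC)$ via \eqref{eq:rej-coset}) rather than giving a separate proof. Nothing is missing; your explicit check that both quantities descend to well-defined functions on $\cosC$ is precisely the content the paper relies on.
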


\eat{

\subsection{Derandomized Hypercubes}

We consider Cayley graphs over groups of characteristic $2$. Let
$\calA = \F_2^h$ for some $h > 0$.  A distribution $\calD$ over $\calA$ gives rise to a
weighted graph $C(A,\calD)$ where the weight of edge $(\alpha,\beta)$
equals $\calD(\alpha + \beta)$. 

The symmetry of Cayley graphs makes it easy to compute their
eigenvectors and eigenvalues explicitly. Let $\calA^*$ denote the space of
all linear functions $b: \calA \rgta \F_2$. The characters of the group $\calA$
are in $1$-$1$ correspondence with linear functions: $b \in \calA^*$ corresponds to a
character $\chi_b: \calA \rgta \pmo$  given by $\chi_b(\alpha) = (-1)^{b(\alpha)}$. 
The eigenvectors of $C$ are precisely the characters $\{\chi_b\}_{b
  \in \calA^*}$. The corresponding eigenvalues are given by
$\lambda(b) = \E_{\alpha \in \calD}[\chi_b(\alpha)]$.

\eat{We view the set $\calA^*$ of linear functions as an $\F_2$ linear
space, the set can have linear dependencies  over $\F_2$. }

As mentioned earlier, the Cayley graphs we are interested in can be
viewed as derandomizations of the $\eps$-noisy hypercube, which retain
many of the nice spectral properties of the Boolean hypercube. Before
defining them formally, we list these properties that we would like
preserved (at least approximately). 

\begin{enumerate}
\item {\bf Large Eigenvalues.} There are $h$ ``top'' eigenvectors
  $\{\chi_{e_i}\}_{i=1}^h$ whose eigenvalues satisfy
  $\lambda(e_i) \geq 1- \eps$. 
\item {\bf Linear Independence.} The linear functions $\{e_1,\ldots,e_h\}$
  corresponding to the top eigenvectors are linearly independent over $\F_2$.
\item {\bf Spectral Decay.} For $a \in \calA^*$, if $a = \sum_{i \in
  S} e_i$, then $\lambda(a) \leq  (1 - \eps)^{|S|}$.
\end{enumerate}

\eat{
These properties are sufficient to guarantee $2-4$ hypercontractivity,
which gives small-set expansion and more. Hypercontractivity is at the heart of many deep
results in the analysis of Boolean functions, including the
Kahn-Kalai-Linial Theorem (KKL) and Freidgut's Junta Theorem.}

We are interested in Cayley graphs whose threshold rank $n$ is
possibly (much) larger than $h$. But this means that the corresponding dual vectors which
lie in the space $\calA^*$ of  dimension $h < n$ can no longer be linearly independent. So we
relax the Linear Independence condition, and only ask that there
should be no {\em short} linear dependencies between these
vectors. The Spectral Decay condition will stay the same, except that
we need to modify the notion of rank to account for linear
dependencies.

We say a set $S$  of vectors in an $\F_2$-linear space is $d$-wise independent
if every $T \subseteq S$ where $|T| < d$ is linearly independent over $\F_2$. 
For a set of vectors $S = \{s_1,\ldots,s_n\}$ which span a space
$\calT$, we use $\rk_S(t)$ for $t \in \calT$ to denote the smallest $k$
such that $t$ can be expressed as the sum of $k$ vectors
from $S$.

\begin{Def} \label{def:cayley-spectrum}
Let $\Cay(\calA,\calD)$ be a Cayley graph on the group $\calA = \F_2^h$.
Let $\mu, \nu \in [0,1]$ and $d \in \{1,\ldots,n\}$. 
Let $\calB^* = \{b_1,\ldots,b_n\}$ be a $d$-wise independent set
of generators for $\calA^*$ of cardinality $n$. We say that $\calB^*$ is a
$(\mu,\nu)$-spectrum generator for $\Cay(\calA,\calD)$ if it satisfies the following properties:
\begin{itemize}
\item {\bf Large Eigenvalues.} $\lambda(b) \geq 1 - \mu$ for every  $b \in \calB^*$.

\item {\bf Spectral Decay.} For $a \in \calA^*$, $\lambda(a) \leq 1 - \nu \cdot \rk_{\calB^*}(a)$. 
\end{itemize}
\end{Def}

Note that any set of generators $\calB^*$ for $\calA^*$ gives us some
values of $n,d,\mu$ and $\nu$. We would like $n, d$ to be large. Also,
applying the Spectral decay condition to $b \in \calB^*$, we see that  
$$  1 - \mu \leq \lambda(b) \leq  1 - \nu$$ 
hence $\mu \geq \nu$. Ideally, we would like
them to be within a constant factor of each other.

We refer to such graphs as ``derandomized hypercubes''. The reason is
that if there is a generating set of size $n$ which is significantly
larger than the dimension $h$, then the resulting
graph has spectral properties that resemble the $n$
dimensional hypercube, although it has only $2^h < < 2^n$
vertices. Every Cayley graph $\Cay(\calA,\calD)$ together with a
generating set $\calB^*$ gives us a derandomized hypercube, the
parameters $n, d,\mu, \nu$ tell us how good the derandomization is
(just like any code $\calC$ and dual distribution $\calD$ gives us
local tester, whose quality is governed by the parameters it achieves). 

}

\section{Locally Testable Codes and Metric Embeddings}

Let $S =\{s_1,\ldots,s_n\}  \subset \F_2^h$ be set of $n \geq h$ generators of $\F_2^h$
that are $d$-wise independent. Let $\calG = \Cay(\F_2^h,S)$ be the Cayley
graph whose edges correspond to the set $S$. The graph $\calG$ can be naturally associated with a
code $\calC_\calG$ which consists of all vectors $c =
(c_1,\ldots,c_n)$ such that $\sum_ic_is_i =0$. It is easy to see that
$\calC_\calG$ is an $[n,n-h,d]_2$ linear code. 

Similarly, one can start from an $[n,k,d]_2$ linear code $\calC$ and
construct a Cayley graph $\calG_\calC$ on $\F_2^{n-k}$.  We take the
vertex set to be $\F_2^n/\calC$. We add an edge $(\bar{x},\bar{y})$ if
there exist $x \in \bar{x}$ and $y \in \bar{y}$ such that $d(x,y) = 1$. It is easy to see that is equivalent to taking $S =
\{\bar{e_1},\ldots,\bar{e_n}\}$, and this set is $d$-wise
independent by the distance property of $\calC$. 

It is easy to see that this construction inverts the
previous construction. Henceforth we will fix a code  $\calC$
and a graph  $\calG$ that can be derived from one another. The vertex
set of $\calG$ is given by $V(\calG)=  \F_2^n/\calC$ and the edge set
$E(\calG)$ by $\{\bar{x},\bar{x} + \bar{e_i}\}$ for $\bar{x} \in
\F_2^n/\calC$ and $i\in [n]$.  
 
\begin{Lem}
Let $d_\calG$ denote the shortest path metric on $\calG$. We have
\[ d_\calG(\bx,\by) = d_\calG(\bx + \by,0) = d(x + y,\calC)\]
\end{Lem}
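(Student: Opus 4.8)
The plan is to split the claimed chain of equalities into two independent pieces: the first equality $d_\calG(\bx,\by)=d_\calG(\bx+\by,0)$ follows from vertex-transitivity of Cayley graphs, and the second equality $d_\calG(\bx+\by,0)=d(x+y,\calC)$ follows by identifying graph distance with the ``generator rank'' $\rk_\bcalE$ and then invoking the coset reformulation already established (Lemma~\ref{lem:ltc-coset}, and the displayed identity $d(\bar v,\calC)=\rk_\bcalE(\bar v)$ preceding it).

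For the first equality, I would observe that for every $a\in\F_2^n/\calC$ the translation $w\mapsto w+a$ is an automorphism of $\calG$: it sends an edge $\{w,w+\bar e_i\}$ to $\{w+a,w+a+\bar e_i\}$, again an edge of $\calG$. Applying the translation by $\bx$ and using that we are in characteristic $2$ (so $\bx+\bx=0$), this automorphism carries the pair $(\bx,\by)$ to the pair $(0,\bx+\by)$, hence $d_\calG(\bx,\by)=d_\calG(0,\bx+\by)=d_\calG(\bx+\by,0)$.

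For the second equality, I would show $d_\calG(0,\bar z)=\rk_\bcalE(\bar z)$ for every coset $\bar z$. A walk of length $\ell$ from $0$ to $\bar z$ is by definition a sequence $i_1,\dots,i_\ell\in[n]$ with $\bar e_{i_1}+\cdots+\bar e_{i_\ell}=\bar z$; taking a subset $T\subseteq[n]$ of minimum size with $\sum_{i\in T}\bar e_i=\bar z$ gives such a walk of length $\rk_\bcalE(\bar z)$, so $d_\calG(0,\bar z)\le\rk_\bcalE(\bar z)$. Conversely, in a shortest walk no index can repeat, since deleting two occurrences of a repeated index yields a strictly shorter walk with the same endpoint (as $\bar e_i+\bar e_i=0$); hence a shortest walk exhibits $\bar z$ as a sum of $d_\calG(0,\bar z)$ distinct elements of $\bcalE$, giving $\rk_\bcalE(\bar z)\le d_\calG(0,\bar z)$. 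Combining with the previously established $\rk_\bcalE(\overline{x+y})=d(x+y,\calC)$ finishes the proof.

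There is no real obstacle here: the only point requiring a word of care is that walks in the Cayley graph may reuse generators, which is exactly why the ``delete a repeated index'' observation is needed to pass from walks to subsets; all the genuine content is carried by the coset reformulation proved earlier.
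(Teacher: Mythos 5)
Your proposal is correct and follows essentially the same route as the paper's own (very terse) proof: both identify a path between $\bx$ and $\by$ with a representation of $\bx+\by$ as a sum of the coset representatives $\bar e_i$, and then use the identity $d(\bar v,\calC)=\rk_{\bcalE}(\bar v)$ from the coset reformulation. Your version merely makes explicit two points the paper glosses over --- the translation-invariance of $d_\calG$ and the reduction from walks (which may repeat generators) to subsets via cancellation in characteristic $2$ --- which is a welcome bit of extra care but not a different argument.
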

\begin{Proof}
If $d_\calG(\bar{x},\bar{y}) \leq d$ then there exists $T \subset
[n]$ of size $d$ such that 
\[\bar{y} = \bar{x} + \sum_{j \in T}\bar{e}_j \ \ \Rightarrow \ \ \bar{x} + \bar{y} = \sum_{j \in T}\bar{e}_j \]
hence $d(\bar{x} + \bar{y},\calC) \leq d$. Similarly, if
$d(x + y,\calC) \leq d$, that gives an $\bar{x}$--$\bar{y}$ path of length $d$ in
$\calG$.
\end{Proof}

An $\ell_1$-embedding of the shortest path metric $d_\calG$ on the graph
$\calG$ is a distribution $\calD$ over Boolean functions $f:V(\calG) =
\F_2^n/\calC \rgta \pmo$. The distance $\delta(\bx,\by)$ between
a pair of vertices $\bx$ and $\by$ under this embedding is 
\begin{align*}
\delta(\bx,\by) = \Pr_{f \in \calD}[f(\bx) \neq f(\by)]
\end{align*} 
We define the stretch of an edge $(\bx,\by)$ to be the ratio
$\delta(\bx,\by)/d(\bx,\by)$.  The distortion $c_\calD$ of the
embedding $\calD$ is the ratio of the maximum to the minimum stretch
of any pair of vertices. It is given by  
\begin{align*}
c_\calD & = \frac{\max_{\bx,\by \in V(\calG)}\delta(\bx,\by)/d_\calG(\bx,\by)}{\min_{\bx,\by}\delta(\bx,\by)/d_\calG(\bx,\by)}
\end{align*}
It follows by the triangle inequality that the stretch is maximized by some edge. Hence we get
\begin{align}
\label{eq:distortion}
c_\calD = \frac{\max_{\bx\in
    V(\calG),i\in[n]}\delta(\bx,\bx+\be_i)}{\min_{x,y}\delta(\bx,\by)/d_\calG(\bx,\by)}  
\end{align}
The minimum $c$ achieved over all $\ell_1$-embeddings of $\calG$ is
denoted $c_1(\calG)$.

\begin{Def}
An embedding $\calD$ of $\calG =\Cay(\F_2^h,S)$ into $\ell_1$ is linear
if $\calD$ is supported on functions $\chi_\alpha(x) =
(-1)^{\alpha(x)}$ where $\alpha$ is an $\F_2$-linear function on $V(\calG)$.
\end{Def}

The space of linear functions on $\F_2^n/\calC$ is
isomorphic to $\calC^\perp$.  In a linear embedding, we have 
\begin{align*}
\delta(\bx,\by) & = \Pr_{\alpha \in \calD}[\chi_\alpha(\bx) \neq \chi_\alpha(\by)]\\
& = \Pr_{\alpha \in \calD}[\chi_\alpha(\bx + \by) \neq 1]\\
& = \delta(\bx + \by,0)
\end{align*}
Thus, the distance $\delta$ is invariant under shifting in linear embeddings,
just like the shortest path distance $d_\calG$. Indeed, the next lemma
shows that we can replace any embedding by a linear embedding without
increasing the distortion.

\begin{Lem}
\label{lem:linear}
There is a linear embedding $\calD$ of $\calG$ into $\ell_1$
achieving distortion $c_1(\calG)$.
\end{Lem}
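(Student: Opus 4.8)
The plan is to start from an $\ell_1$-embedding of $\calG$ of optimal distortion, average it over all translations of the abelian group $V(\calG)=\F_2^n/\calC$, and argue that the translation-symmetrized embedding (a) has distortion no larger than the one we started with, and (b) is automatically a \emph{linear} embedding. The two ingredients that make this work are the translation-invariance of the shortest-path metric established in the preceding lemma, namely $d_\calG(\bx+\bar z,\by+\bar z)=d(x+y,\calC)=d_\calG(\bx,\by)$, and a short Fourier computation showing that the translation-average of a single cut metric on $V(\calG)$ is exactly a convex combination of the cut metrics of the characters $\chi_\alpha$.

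In more detail, I would first fix an $\ell_1$-embedding $\calD_0$ achieving distortion $c_1(\calG)$ (this minimum is attained because the $\ell_1$-semimetrics on the finite set $V(\calG)$ form a polyhedral cone), with distance $\delta_0(\bx,\by)=\E_{f\sim\calD_0}[\rho_f(\bx,\by)]$, where for a cut $f:V(\calG)\to\pmo$ we write $\rho_f(\bx,\by)=\tfrac12\bigl(1-f(\bx)f(\by)\bigr)$. For $\bar z\in V(\calG)$ put $f_{\bar z}(\bx)=f(\bx+\bar z)$, and let $\calD_1$ be the embedding that samples $f\sim\calD_0$ and a uniform $\bar z\in V(\calG)$ and outputs $f_{\bar z}$, so that $\delta_1(\bx,\by)=\E_{f,\bar z}[\rho_f(\bx+\bar z,\by+\bar z)]=\E_{\bar z}[\delta_0(\bx+\bar z,\by+\bar z)]$.

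Next I would check that the distortion does not increase: since $d_\calG$ is translation-invariant, for $\bx\neq\by$
\[
\frac{\delta_1(\bx,\by)}{d_\calG(\bx,\by)}=\E_{\bar z}\!\left[\frac{\delta_0(\bx+\bar z,\by+\bar z)}{d_\calG(\bx+\bar z,\by+\bar z)}\right],
\]
which is an average of stretches of $\calD_0$, hence lies between the smallest and largest stretch of $\calD_0$; therefore $c_{\calD_1}\le c_{\calD_0}=c_1(\calG)$, so $c_{\calD_1}=c_1(\calG)$. Finally I would show $\calD_1$ is linear: expanding $f$ over the characters $\{\chi_\alpha\}_{\alpha\in\calC^\perp}$ of $V(\calG)$ and using the autocorrelation identity $\E_{\bar w}[f(\bar w)f(\bar w+\bar u)]=\sum_\alpha \h f(\alpha)^2\chi_\alpha(\bar u)$, Parseval ($\sum_\alpha\h f(\alpha)^2=1$), and $\rho_{\chi_\alpha}(\bx,\by)=\tfrac12\bigl(1-\chi_\alpha(\bx+\by)\bigr)$, one obtains
\[
\E_{\bar z}[\rho_f(\bx+\bar z,\by+\bar z)]=\tfrac12-\tfrac12\sum_\alpha\h f(\alpha)^2\chi_\alpha(\bx+\by)=\sum_\alpha\h f(\alpha)^2\,\rho_{\chi_\alpha}(\bx,\by).
\]
Averaging over $f\sim\calD_0$ gives $\delta_1=\sum_\alpha w_\alpha\,\rho_{\chi_\alpha}$ with $w_\alpha=\E_{f\sim\calD_0}[\h f(\alpha)^2]\ge 0$ and $\sum_\alpha w_\alpha=1$, i.e.\ $\calD_1$ is realized by the linear embedding that outputs $\chi_\alpha$ with probability $w_\alpha$; since it has distortion $c_1(\calG)$, the lemma follows.

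I do not expect a real obstacle here: the only steps that need care are verifying that translation-averaging cannot increase the distortion — which is exactly where translation-invariance of $d_\calG$ is used — and the Fourier identity identifying the translation-average of a cut with a mixture of character cuts; attainment of the minimum in the definition of $c_1(\calG)$ is a minor point.
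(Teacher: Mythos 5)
Your proposal is correct and follows essentially the same route as the paper: symmetrize an optimal embedding by averaging over translations (using translation-invariance of $d_\calG$ to show the distortion does not increase), and then identify the translation-averaged cut distances, via the Fourier expansion over $\calC^\perp$ and Parseval, with the linear embedding that samples $\chi_\alpha$ with probability proportional to $\E_f[\hat f(\alpha)^2]$. The only cosmetic difference is that the paper averages over shifts $a\in\F_2^n$ rather than over $V(\calG)=\F_2^n/\calC$, which is the same operation since the functions are coset-invariant.
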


To prove this lemma, we set up some machinery. Let $f:\F_2^n/
\calC \rgta \pmo$ be a Boolean function on $\F_2^n/\calC$. We can
extend $f$ to a function on all of $\F_2^n$ by setting $f(x) =
f(\bar{x})$. (We will henceforth switch freely between both notions). 
The resulting function is invariant under cosets of
$\calC$, which implies that its Fourier spectrum is supported on
$\calC^\perp$. Hence we have
\begin{align*}
f(x) = \sum_{\alpha \in \calC^\perp} \hat{f}(\alpha)\chi_\alpha(x)
\end{align*}
Further, we have $\sum_{\alpha \in \calC^\perp} \hat{f}(\alpha)^2
=1$. 

We now proceed to the proof of Lemma \ref{lem:linear}. 
\begin{Proof}
Given an arbitrary distribution $\calD$ on Boolean functions, we define a new distribution
$\calD'$ on functions where we sample $a \in \F_2^n$ uniformly at
random, $f \in \calD$, and return the function $f':\F_2^n \rgta \pmo$
defined by $f'(x) = f(x +a)$. We will show that $c_\calD' \leq c_\calD$. 

Let $\delta'(x,y)  = \Pr_{f' \in \calD'}[f'(x) \neq
  f'(y)]$. We have
\begin{align*}
\delta'(x,y) & = \Pr_{a \in \F_2^n, f \in \calD}[f(x +a) \neq f(y+a)]\\ 
& = \E_{a \in  \F_2^n}\left[\Pr_{f \in \calD'}[f(x +a) \neq
    f(y+a)]\right]\\
& = \E_{a \in  \F_2^n}[\delta(x +a, y+a)].
\end{align*}
Let $a_1$ and $a_2$ be the values of $a$ that minimize and maximize
$\delta(x+a,y+a)$ respectively. Then
\begin{align*}
\delta(x +a_1,y + a_1) \leq \delta'(x,y) \leq \delta(x +a_2,y + a_2)
\end{align*}
Hence we have
\begin{align*}
\frac{\delta(x +a_1,y + a_1)}{d(x+a_1,y+a_1)} \leq \frac{\delta'(x,y)}{d(x,y)} \leq \frac{\delta(x +a_2,y + a_2)}{d(x+a_2,y+a_2)}
\end{align*}
since all the denominators are equal.
But this implies that
\begin{align*}
\min_{x,y}\frac{\delta'(x,y)}{d(x,y)} \geq
\min_{x,y}\frac{\delta(x,y)}{d(x,y)},\\
\max_{x,y}\frac{\delta'(x,y)}{d(x,y)} \leq
\max_{x,y}\frac{\delta(x,y)}{d(x,y)}
\end{align*}
and hence
\begin{align*}
c_{\calD'} =
\frac{\max_{x,y}\frac{\delta'(x,y)}{d(x,y)}}{\min_{x,y}\frac{\delta'(x,y)}{d(x,y)}}
\leq
\frac{\max_{x,y}\frac{\delta(x,y)}{d(x,y)}}{\min_{x,y}\frac{\delta(x,y)}{d(x,y)}} =
c_{\calD}.
\end{align*}

Next we show that there is a linear embedding $\calD''$ with
distortion $c_{\calD''} = c_{\calD'}$. The embedding is simple to describe: we first
sample $f \in \calD$, we then sample $\chi_\alpha \in \calC^\perp$ with
probability $\hat{f}(\alpha)^2$. We denote this distribution on
$\calC^\perp$ by $\hat{f}^2$. Note that
\begin{align*}
\delta'(x,y) &  = \Pr_{a \in \F_2^n, f \in \calD}[f(x +a) \neq f(y+a)]\\
& = \E_{f \in \calD}\left[\frac{1}{2}\E_{a\in \F_2^n}[1 - f(x +a)f(y+a)]\right]\\
& = \E_{f \in \calD}\left[\frac{1}{2}\E_{a\in \F_2^n}\left[1 -
  \left(\sum_{\alpha \in \calC^\perp}\hat{f}(\alpha)\chi_\alpha(x
  +a)\right)\left(\sum_{\beta \in \calC^\perp}\hat{f}(\beta)\chi_\beta(y
  +a)\right)\right]\right]\\
& = \E_{f \in \calD}\left[\frac{1}{2}\left(1 -
  \sum_{\alpha,\beta \in
    \calC^\perp}\hat{f}(\alpha)\hat{f}(\beta)\chi_\alpha(x)\chi_\beta(y)\E_{a
    \in \F_2^n}[\chi_\alpha(a)\chi_\beta(a)]\right)\right]\\
& = \E_{f \in \calD}\left[\frac{1}{2}\left(1 - \sum_{\alpha
    \in \calC^\perp}\hat{f}(\alpha)^2\chi_\alpha(x)\chi_\alpha(y)\right)\right]\\
& = \E_{f \in \calD}\left[\sum_{\alpha \in \calC^\perp}\hat{f}(\alpha)^2\frac{(1 - \chi_\alpha(x)\chi_\alpha(y)}{2}\right]\\
& = \Pr_{f \in \calD}\Pr_{\alpha \in \hat{f}^2}[\chi_\alpha(x) \neq \chi_\alpha(y)]\\
& = \delta''(x,y).
\end{align*}
From this it follows that $c_\calD'' = c_\calD' \leq c_\calD$. 

The lemma follows by taking $\calD$ to be the $\ell_1$ embedding of $\calG$ that minimizes distortion.
\end{Proof}

We now prove the main result of this section.

\begin{Thm}
\label{thm:embedding}
We have $c_1(\calG) \leq c$ iff there exists an $(\eps,\delta)$-tester
for $\calC$ where $\delta \geq \eps/c$.
\end{Thm}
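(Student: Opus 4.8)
The plan is to reduce to \emph{linear} embeddings via Lemma~\ref{lem:linear} and then observe that a linear embedding of $\calG$ \emph{is} a tester for $\calC$, with the distortion of the embedding equal to the ratio $\eps/\delta$ of its best smoothness to its best soundness.

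First I would invoke Lemma~\ref{lem:linear} to get that $c_1(\calG)$ is attained by a linear embedding, i.e.\ by a distribution $\calD$ supported on $\calC^\perp$. For such $\calD$, the computation of $\delta(\bx,\by)$ for linear embeddings done above gives $\delta(\bx,\by)=\Pr_{\alpha\in\calD}[\alpha(x+y)=1]=\Rej(\overline{x+y},\calD)$, and the shortest-path lemma gives $d_\calG(\bx,\by)=d(x+y,\calC)=\rk_{\bcalE}(\overline{x+y})$. Hence the stretch of a pair $(\bx,\by)$ depends only on $\bar v:=\overline{x+y}$ and equals $\Rej(\bar v,\calD)/\rk_{\bcalE}(\bar v)$.

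Next I would match the two parts of $c_\calD$ in~(\ref{eq:distortion}) to smoothness and soundness. By shift-invariance the maximum edge stretch equals $\max_{i\in[n]}\Rej(\bar{e}_i,\calD)$, which by~(\ref{eq:smooth-rej}) is precisely the least $\eps$ for which $\calD$ is $\eps$-smooth; denote it $\eps_\calD$. The minimum stretch equals $\min_{\bar v\neq 0}\Rej(\bar v,\calD)/\rk_{\bcalE}(\bar v)$, which by Lemma~\ref{lem:ltc-coset} is precisely the greatest $\delta$ for which $\calD$ has soundness $\delta$; denote it $\delta_\calD$. Thus $c_\calD=\eps_\calD/\delta_\calD$, and this is $\geq 1$ since $\delta_\calD\leq\Rej(\bar{e}_i,\calD)\leq\eps_\calD$.

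Given this identity, both directions follow at once. If $c_1(\calG)\leq c$, fix a linear embedding $\calD$ with $c_\calD\leq c$; then $\calD$ is an $(\eps_\calD,\delta_\calD)$-tester with $\delta_\calD=\eps_\calD/c_\calD\geq\eps_\calD/c$. Conversely, given an $(\eps,\delta)$-tester $\calD$ with $\delta\geq\eps/c$, view $\calD$ as a linear embedding: $\eps$-smoothness gives $\eps_\calD\leq\eps$ and soundness $\delta$ gives $\delta_\calD\geq\delta$, so $c_1(\calG)\leq c_\calD=\eps_\calD/\delta_\calD\leq\eps/\delta\leq c$. The only point needing care is the reduction to linear embeddings, which is exactly Lemma~\ref{lem:linear}, together with the routine checks that the maximum over edges in~(\ref{eq:distortion}) collapses to a maximum over the generators $\bar{e}_i$ by shift-invariance and that $\bar v=0$ contributes nothing to the minimum stretch; beyond this bookkeeping there is no real obstacle.
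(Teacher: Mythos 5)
Your proposal is correct and follows essentially the same route as the paper's proof: reduce to linear embeddings via Lemma~\ref{lem:linear}, use shift-invariance to rewrite the distortion as in~(\ref{eq:lin-distortion}), and identify the maximum edge stretch with the optimal smoothness and the minimum stretch with the optimal soundness, so that the distortion of a linear embedding equals $\eps/\delta$ of the corresponding tester. The only cosmetic difference is that you make the identity $c_\calD=\eps_\calD/\delta_\calD$ and the exclusion of $\bar v=0$ explicit, which the paper handles implicitly by taking~(\ref{eq:smooth-tight}) and~(\ref{eq:sound-tight}) with equality.
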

\begin{Proof}
For linear embeddings, we can use shift invariance to simplify the
expression for distortion. Since $\calD$ is a distribution on
$C^\perp$, we can view it as a tester for $\calC$. Note that
$\Rej(\bx,\calD) = \delta(\bx,0)$. Recall by Equation \eqref{eq:distortion}
\begin{align*}
c_\calD = \frac{\max_{\bx\in V(\calG),i\in[n]}\delta(\bx,\bx+\be_i)}{\min_{x,y}\delta(\bx,\by)/d_\calG(\bx,\by)}
\end{align*}
We can use $\delta(\bx,\by) = \delta(\bx +\by,0) = \Rej(\bx
+\by,\calD)$ to rewrite this as
\begin{align}
\label{eq:lin-distortion}
c_\calD = \frac{\max_{i\in[n]}\Rej(\be_i,\calD)}{\min_{\bx \in V(\calG)}\Rej(\bx,\calD)/d(\bx,0)}
\end{align}

Given a linear embedding specified by a distribution $\calD$ that
gives distortion $c_\calD$, we view $\calD$ as a tester. By definition, it has smoothness $\eps$ for 
\begin{align} 
\label{eq:smooth-tight}
\eps \geq \max_{i\in[n]}\Rej(\be_i,\calD)
\end{align} 
and has soundness $\delta$ for 
\begin{align}
\label{eq:sound-tight}
\delta \leq \min_{\bx \in V(\calG)}\Rej(\bx,\calD)/d(\bx,0)
\end{align}
since any such $\delta$ satisfies the condition
\[ \Rej(\bx,\calD) \geq \delta d(\bx,0).\]
By taking $\eps,\delta$ to satisfy Equations \ref{eq:smooth-tight} and
\ref{eq:sound-tight} with equality, we get $\delta = \eps/c_\calD$.

In the other direction, assume we have a $(\eps,\delta)$-tester for
$\calC$ where $\delta \geq \eps/c$. Note that $\eps, \delta$ must
satisfy Equations 
\ref{eq:smooth-tight} and \ref{eq:sound-tight}. Plugging these into
Equation \ref{eq:lin-distortion}, we get
\begin{align*}
c_\calD = \frac{\max_{i\in[n]}\Rej(\be_i,\calD)}{\min_{\bx \in
    V(\calG)}\Rej(\bx,\calD)/d(\bx,0)} \leq \frac{\eps}{\delta}
\leq c.
\end{align*}
\end{Proof}

\eat{
\begin{Cor}
\label{cor:metric-asymptotic}
There exists an asymptotically good family of codes $\{\calC_n\}$
where $\calC_n$ has blocklength $n$ and is $(O(1/n),\Omega(1/n))$-locally testable iff for infinitely many $h$ there
exists a Cayley graph $\calG_h = \Cay(\F_2^h,S)$ where
\begin{itemize}
\item $|S| \geq \rho_1 h$ for $\rho_1 > 1$, 
\item the elements of $S$ are $(\rho_o h)$-wise independent for $\rho_o > 0$,
\item $c_1(\calG) =O(1)$.
\end{itemize}
\end{Cor}

\begin{Cor}
\label{cor:metric-constant}
Let $d \geq 3$. There exists an asymptotic family of
codes $\{\calC_n\}$ where $\calC_n$ has parameters $[n,n - c_d\log
  n,d]_2$ and is $(\eps,\Omega(\eps))$-locally testable \footnote{for $\eps =
O(1/d)$. See Section \ref{sec:boosting} for a discussion of
  the possible values of $\eps$.}  iff for infinitely many $h$ 
there exists a Cayley graph $\calG_h = \Cay(\F_2^h,S)$ where
\begin{itemize}
\item $|S| = 2^{h/c_d}$ where $c_d \geq 1$ is a function of $d$,
\item the elements of $S$ are $d$-wise independent,
\item $c_1(\calG) =O(1)$.
\end{itemize}

\end{Cor}
}

\subsection{Boosting the soundness}
\label{sec:boosting}

Theorem \ref{thm:embedding} implies the existance of an
$(\eps,\delta)$-tester for $\calC$, where 
\[ \frac{\eps}{c_1(\calG)} \leq \delta \leq \eps. \] 
While this is the best ratio possible between $\eps$ and $\delta$,
Theorem \ref{thm:embedding} does not seem to guarantee the right
absolute values for them. In this section, we show that one can
achieve this by repeating the tests.  First, we identify the right
absolute values.

Let $t$ denote the covering radius of 
$\calC$, and let $\bx$ be a codeword at distance $t$ from
$\calC$. Since
\[1 \geq \Rej(\bx,\calD) \geq \delta t\]
we get $\delta \leq 1/t$. Given this upper bound, we would like
$\eps$ to be $\Theta(1/t)$ and $\delta$ to be
$\Theta(1/(c_1(\calG)t))$. We show that this is
possible, with a small loss in constants (which we do not attempt
to optimize).

\begin{Thm}
\label{thm:cov-radius}
There is a $(1/(4t), 1/(16c_1(G)t))$-tester for $\calC$.
\end{Thm}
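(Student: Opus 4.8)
The plan is to start from the optimal $\ell_1$-embedding and then \emph{rescale} the tester it produces, so as to hit the target absolute values while paying only a constant factor in the ratio $\eps/\delta$. By Lemma~\ref{lem:linear} take a linear $\ell_1$-embedding $\calD_0$ of $\calG$ of distortion $c_1(\calG)$; viewing $\calD_0$ as a distribution on $\calC^\perp$, i.e.\ a tester for $\calC$, Equation~\eqref{eq:lin-distortion} applied to this $\calD_0$ gives $\eps_0/\delta_0=c_1(\calG)$, where $\eps_0=\max_{i\in[n]}\Rej(\be_i,\calD_0)$ and $\delta_0=\min_{v}\Rej(v,\calD_0)/d(v,\calC)$. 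Since $\calG$ is connected, $c_1(\calG)<\infty$ and hence $\delta_0>0$; we may assume $\calC\neq\F_2^n$, so $t\ge 1$, and (as in the paragraph preceding the theorem) a word at distance $t$ from $\calC$ forces $\delta_0 t\le 1$. Thus $\calD_0$ already has the right ratio, and the only issue is that $\delta_0$ — hence also $\eps_0=c_1(\calG)\,\delta_0$ — may be far below $1/t$, so that words near the covering radius are rejected with probability $\ll 1$.

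To amplify while staying in the single-query form (a distribution on $\calC^\perp$), I would use a \emph{geometric repetition}: fix $q\in(0,1)$, draw $k$ from the geometric law $\Pr[k=j]=(1-q)q^j$ $(j\ge 0)$, draw $\alpha_1,\dots,\alpha_k\sim\calD_0$ independently, and output $\alpha_1+\cdots+\alpha_k$ (the empty sum being $0$). The resulting distribution $\calD_1$ is supported on $\calC^\perp$ because $\calC^\perp$ is a subspace. Using $\chi_{\alpha+\beta}=\chi_\alpha\chi_\beta$, independence of the $\alpha_i$, and summing the geometric series (valid since $|q\,\E_{\alpha\sim\calD_0}[\chi_\alpha(v)]|\le q<1$), one gets $\E_{\alpha\sim\calD_1}[\chi_\alpha(v)]=(1-q)\big/\big(1-q\,\E_{\alpha\sim\calD_0}[\chi_\alpha(v)]\big)$, and hence, writing $\rho(v)=\Rej(v,\calD_0)$,
\[
\Rej(v,\calD_1)\;=\;\frac{q\,\rho(v)}{(1-q)+2\,q\,\rho(v)}.
\]
The point of this construction is that the right-hand side is strictly increasing in $\rho(v)\in[0,1]$, so the smoothness bound $\rho(\be_i)\le\eps_0$ and the soundness bound $\rho(v)\ge\delta_0\,d(v,\calC)$ both pass through cleanly; a plain $\ell$-fold XOR of independent tests would fail here, since for even $\ell$ it can \emph{decrease} the rejection probability of words with $\rho(v)>1/2$.

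The last step is to choose $q=1/(4\,c_1(\calG)\,\delta_0 t+1)$, so that $1-q=4\,c_1(\calG)\,\delta_0 t\cdot q$. By monotonicity and $\rho(\be_i)\le\eps_0=c_1(\calG)\,\delta_0$ one gets $\Rej(\be_i,\calD_1)\le \tfrac{q}{1-q}\,\eps_0=\tfrac{1}{4t}$, so $\calD_1$ is $(1/4t)$-smooth. For soundness, fix $v$ and set $m=d(v,\calC)\le t$; from $\rho(v)\ge\delta_0 m$ and the fact that $m\mapsto\tfrac{q\,\delta_0}{(1-q)+2\,q\,\delta_0 m}$ is decreasing,
\[
\Rej(v,\calD_1)\;\ge\;\frac{q\,\delta_0 m}{(1-q)+2\,q\,\delta_0 m}\;\ge\;m\cdot\frac{q\,\delta_0}{(1-q)+2\,q\,\delta_0 t}\;=\;\frac{m}{(4\,c_1(\calG)+2)\,t}\;\ge\;\frac{m}{16\,c_1(\calG)\,t},
\]
the last inequality using $c_1(\calG)\ge 1$. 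Hence $\calD_1$ has soundness at least $1/(16\,c_1(\calG)\,t)$, which proves the theorem. The step I expect to need the most care is the identity for $\Rej(v,\calD_1)$ — checking that $\calD_1$ is a bona fide probability distribution on $\calC^\perp$ and that the character-sum/geometric-series computation is legitimate; after that, the smoothness and soundness bounds are elementary algebra with the constants tracked honestly.
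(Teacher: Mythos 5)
Your proof is correct, and it follows the same high-level strategy as the paper's: obtain a tester with the optimal ratio $\eps_0/\delta_0=c_1(\calG)$ from Theorem~\ref{thm:embedding}, and then amplify by XOR-ing independent samples from the dual distribution, using the covering radius $t$ to set the scale. Where you differ is in the amplification device. The paper's Lemma~\ref{lem:boost} uses a \emph{fixed} number $\ell$ of repetitions, and must first establish the a priori bound $\Rej(\bx,\calD)\le t\eps$ (via smoothness and the covering radius) so that $\ell=\lfloor 1/(4t\eps)\rfloor$ keeps every rejection probability in the regime where $(1-2\rho)^\ell\le 1-\ell\rho$ can be truncated; it also needs the preliminary case split ``if $\delta$ already exceeds the bound, we are done.'' You instead randomize the repetition count geometrically, which gives the exact closed form $\Rej(v,\calD_1)=q\rho(v)/((1-q)+2q\rho(v))$; monotonicity in $\rho(v)$ then lets smoothness and soundness pass through for \emph{all} words uniformly, with no truncation inequality, no case analysis, and no need for the pointwise bound $\Rej(\bx,\calD)\le t\eps$ (the covering radius enters only through $m\le t$ in the final estimate). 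The small costs of your route are that $\calD_1$ places constant mass on the trivial test $\alpha=0$ (harmless under the paper's definitions, but worth flagging) and that you must invoke attainment of the optimum $c_1(\calG)$ and $\delta_0>0$, which the paper's Lemma~\ref{lem:linear} and connectivity of $\calG$ justify exactly as you say; your constants $(4c_1(\calG)+2)t$ are in fact slightly better than the stated $16c_1(\calG)t$.
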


We defer the proof of this result to Appendix \ref{app:boosting}.

Note that if $d =\Omega(n)$, then $d$ and $t$ differ by a constant
factor. However, when $d =o(n)$, it could be that $t =\omega(d)$.
In this case, we could relax the soundness requirement for words at distance $\omega(d)$ as follows:
\begin{align*} 
\Rej(\bx,\calD) \geq \begin{cases} \delta d(\bx,\calC) & \text{if}
\  d(\bx,\calC) \leq d \\
\delta d & \text{if} \ d(\bx,\calC) \geq d
\end{cases}
\end{align*}
It is possible to get such a tester where $\eps =
O(1/d)$, $\delta = \Omega(1/(c_1(\calG)d))$ using the same
argument as above, but replacing $t$ with $d$. We omit the details.

\subsection{Relation to previous work}
\label{sec:reform}

This equivalence allows us to reformulate results about LTCs in the
language of metric embeddings and vice versa. We present two examples
where we feel such reformulations are particularly
interesting.

\paragraph{Embedding lower bounds from dual distance: }

Khot and Naor show the following lower bound for $c_1(\calG)$ in terms
of its dual distance. 

\begin{Thm}
\label{thm:kn}
\cite[Theorem 3.4]{KhotNaor}
Let $\calC$ be an $[n,n-h]_2$ code and let $\calG$ be the associated
Cayley graph. Let $d^\perp$ denote its dual distance. Then
\begin{align*}
c_1(\calG) \geq \Omega\left(d^\perp \frac{h}{n\log(n/h)}\right)
\end{align*}
\end{Thm}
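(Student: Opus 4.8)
The plan is to derive the lower bound entirely from the embedding--tester dictionary of this section, using only two elementary facts about $\calC$: every nonzero word of $\calC^\perp$ has Hamming weight at least $d^\perp$, and the covering radius of $\calC$ is $\Omega(h/\log(n/h))$. By Lemma~\ref{lem:linear} it suffices to bound $c_\calD$ from below over all \emph{linear} embeddings, i.e.\ over all distributions $\calD$ on $\calC^\perp$, via the formula $c_\calD = \bigl(\max_{i\in[n]}\Rej(\be_i,\calD)\bigr)\big/\bigl(\min_{\bx\in V(\calG)}\Rej(\bx,\calD)/d(\bx,0)\bigr)$ from~\eqref{eq:lin-distortion}. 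One first observes that we may assume $\calD$ is supported on $\calC^\perp\setminus\{0\}$: placing mass $p$ on $0$ simply multiplies every $\Rej(\bx,\calD)$ by $1-p$, so the numerator and denominator of $c_\calD$ both scale by $1-p$ and $c_\calD$ is unchanged.

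The numerator I would bound by averaging over coordinates: by~\eqref{eq:smooth-rej}, $\max_{i\in[n]}\Rej(\be_i,\calD)\ge \tfrac1n\sum_{i}\Pr_{\alpha\samp\calD}[\alpha_i=1] = \E_{\alpha\samp\calD}\!\left[|\alpha|/n\right] \ge d^\perp/n$, since every $\alpha$ in the support is a nonzero dual codeword and hence has weight at least $d^\perp$. The denominator I would bound using a single bad coset: let $t$ be the covering radius of $\calC$ and let $\bx^\ast$ be a coset with $d(\bx^\ast,0)=t$; then $\min_{\bx}\Rej(\bx,\calD)/d(\bx,0)\le \Rej(\bx^\ast,\calD)/t\le 1/t$, simply because $\Rej\le 1$. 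Substituting both estimates, $c_\calD\ge d^\perp t/n$ for every admissible $\calD$, and therefore $c_1(\calG)\ge d^\perp t/n$.

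Finally I would lower-bound $t$ by a sphere-covering argument: the radius-$t$ Hamming balls around the $2^{n-h}$ codewords of $\calC$ cover $\F_2^n$, so $\sum_{j\le t}\binom nj\ge 2^h$; the standard estimate $\sum_{j\le t}\binom nj\le 2^{nH(t/n)}$ together with $H(x)=\Theta\bigl(x\log(1/x)\bigr)$ for small $x$ then forces $t=\Omega(h/\log(n/h))$. Combining with the previous paragraph gives $c_1(\calG)\ge\Omega\!\bigl(d^\perp\cdot \tfrac{h}{n\log(n/h)}\bigr)$.

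I expect the one genuinely delicate step to be the covering-radius estimate: it is where the $\log(n/h)$ in the statement originates, and it implicitly requires $h$ to be bounded away from $n$ so that $\log(n/h)$ does not degenerate. Everything else is a one-line consequence of the coset reformulation (Lemma~\ref{lem:ltc-coset}) and of Lemma~\ref{lem:linear}; in particular the argument never uses the primal distance $d$ of $\calC$, which is consistent with the Khot--Naor bound depending only on $d^\perp$, $n$, and $h$.
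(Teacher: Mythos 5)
Your proposal is correct and follows essentially the same route as the paper's proof: reduce to linear embeddings (equivalently, testers on $\calC^\perp$), lower-bound the edge stretch by $d^\perp/n$ using the dual weight, upper-bound the minimum stretch by $1/t$ via the covering radius, and estimate $t=\Omega(h/\log(n/h))$ by a sphere-covering argument. The only cosmetic differences are that you argue directly from the distortion formula \eqref{eq:lin-distortion} instead of quoting Theorem~\ref{thm:embedding}'s $(\eps,\delta)$ statement, and that you make explicit the (harmless) reduction to distributions avoiding $0$, which the paper dismisses as a without-loss-of-generality step.
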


In the setting where $h/n = \Omega(1)$ (which is necessary to have
constant relative distance), this gives a lower bound of
$\Omega(d^\perp)$. Thus their result can be seen as the embedding
analogue of the result of BenSasson \etal\ \cite{BenSassonHaRa05}, who showed that the
existence of low-weight dual codewords is a necessary condition for
local testability. Our results allow for a simple alternative proof of Theorem \ref{thm:kn}.

\begin{proof}[Proof of Theorem~\ref{thm:kn}]
By Theorem \ref{thm:embedding}, there exists a $(\eps,\delta)$-tester $\calD$ so that
$c_1(\calG) = \epsilon/\delta$. We may assume without
loss of generality that $\calD$ is supported on non-zero code-words in
$\calC^\perp$, each of which is of weight at least $d^\perp$, so we
have $\eps\geq d^{\perp}/n$. 

As in Section \ref{sec:boosting}, we have $\delta \leq 1/t$ where $t$ is the covering radius of $\calC$.  
We lower bound $t$ by a standard volume argument:
\[ 2^{n-h} \cdot \sum_{t' = 0}^{t} {n \choose t'} \geq 2^n
\ \ \Rightarrow \ \ t = \Omega\left(\frac{h}{\log (n/h)}\right).\] 

So we have
\[ c_1(\calG) \geq \frac{\eps}{\delta} \geq \frac{d^{\perp}}{n}t  = \Omega\left(d^{\perp}
\frac{h}{n\log(n/h)}\right) . \]
\end{proof}

\paragraph{Lower bounds for basis testers: }

A basis tester for a code $\calC$ is a tester $\calD$ which is
supported on a basis for $\calC$. Ben-Sasson \etal\ showed a strong
lower bound for such testers \cite{BGKSV}. Their main result when
restated in our notation says:

\begin{Thm}
\cite[Theorem 5]{BGKSV}
Let $\calC$ be an $[n,k,d]_2$ code with an $(\eps,\delta)$-basis
tester. Then 
\[ \frac{\eps}{\delta} \geq \frac{k d}{3n}.\]
\end{Thm}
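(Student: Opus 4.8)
The plan is to phrase everything through the coset dictionary of Lemma~\ref{lem:ltc-coset}. A basis tester is a distribution $\calD$ on a basis $\alpha_1,\dots,\alpha_h$ of $\calC^\perp$ (so $h=n-k$); writing $p_j=\Pr_{\alpha\gets\calD}[\alpha=\alpha_j]$, smoothness says $\Rej(\bar e_i,\calD)=\sum_{j:\,i\in\supp\alpha_j}p_j\le\eps$ for every $i$, and soundness says $\Rej(\bar v,\calD)\ge\delta\cdot d(v,\calC)$ for every $v$. The feature of a basis (as opposed to a redundant tester) that I would exploit is linear independence: for each $j$ there is a coset $\bar x_j$ of $\calC$ caught \emph{only} by $\alpha_j$, i.e.\ a vector $x_j$ with $\alpha_{j'}(x_j)=0$ for $j'\ne j$ and $\alpha_j(x_j)=1$. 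Then $\Rej(\bar x_j,\calD)=p_j$, so soundness gives $\delta\cdot w_j\le p_j$ where $w_j:=d(x_j,\calC)$ is the minimum weight in the coset $\bar x_j$; and picking any $i\in\supp\alpha_j$ gives $\eps\ge\Rej(\bar e_i,\calD)\ge p_j$.

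First I would dispose of the easy case: if some $w_j\ge kd/(3n)$, then $\delta\le p_j/w_j\le\frac{3n}{kd}p_j\le\frac{3n}{kd}\eps$, which already yields $\eps/\delta\ge kd/(3n)$. So assume from now on that $w_j<kd/(3n)<d/3$ for all $j$. Since $\{\bar x_j\}_{j\le h}$ is the basis of $\F_2^n/\calC$ dual to $\{\alpha_j\}$, every received word $v$ satisfies $v+\sum_{j:\alpha_j(v)=1}x_j\in\calC$, so
\[
 d(v,\calC)\ \le\ \sum_{j:\,\alpha_j(v)=1}|x_j|\ <\ \frac{d}{3}\cdot\#\{j:\alpha_j(v)=1\}.
\]
Thus in this regime any word far from $\calC$ must be rejected by many basis checks; the plan is to manufacture a single received word $v^{*}$ that is simultaneously far from $\calC$ \emph{and} seen with odd parity by only a tiny $p$-mass of the $\alpha_j$, and push this to a contradiction (equivalently, to a direct proof that $\delta\le\frac{3n}{kd}\eps$ after all). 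The ``far from $\calC$'' ingredient is supplied by the distance: if $c\in\calC$ has $|c|=d$ with support $\Sigma$, then for $T\subseteq\Sigma$ with $|T|\le d/2$ one has $d(\mathbf 1_T,\calC)=|T|$, and crucially every $\alpha_j$ meets $\Sigma$ in an \emph{even} number of coordinates (because $\alpha_j\cdot c=0$), so the checks touching $T$ tend to touch it an even --- hence harmless --- number of times. The dimension factor $k$ should enter by running this construction along a whole generating set of $\calC$ (taken in systematic form so that supports overlap only on a fixed set of $k$ information coordinates) and letting $v^{*}$ be the sum of the corresponding half‑support perturbations: its distance from $\calC$ then scales like $kd$, while the triggered $p$-mass is controlled by the identity $\sum_i\Rej(\bar e_i,\calD)=\sum_j p_j|\alpha_j|\le n\eps$ together with the even‑intersection cancellations. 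Comparing $d(v^{*},\calC)=\Omega(kd)$ with $\Rej(v^{*},\calD)=O(n\eps)$ and the displayed inequality closes the loop and pins $\eps/\delta$ at $kd/(3n)$ once the constants are tracked.

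The step I expect to be the real obstacle is precisely this rejection estimate: showing that $\sum_{j:\,\alpha_j(v^{*})=1}p_j$ is genuinely $O(\tfrac{n}{k}\eps)$ rather than the trivial $O(n\eps)$ one gets from a union bound over $\supp v^{*}$. This is the point at which a basis is essential --- a redundant tester could simply append one extra low-weight dual codeword designed to catch $v^{*}$ --- and making the parity/cancellation bookkeeping quantitatively tight, uniformly over the choice of basis and of the weights $p_j$, is the crux; it is the basis-tester analogue, in the coset/embedding language of this paper, of the non-redundancy phenomenon behind \cite{BGKSV}. Everything else is the Lemma~\ref{lem:ltc-coset} dictionary plus elementary counting, and the constant $1/3$ is cosmetic slack to absorb the floors in $|T|\approx d/2$ and in $\eps\ge 1/n$.
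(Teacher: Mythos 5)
You should first note that the paper does not prove this statement at all: it is imported verbatim from \cite[Theorem 5]{BGKSV} and merely restated in the paper's notation, so there is no in-paper argument to compare against and your attempt has to stand on its own. Its preliminaries do stand: the dictionary (basis $\alpha_1,\dots,\alpha_h$ of $\calC^\perp$, dual-basis cosets $\bar x_j$ with $\Rej(\bar x_j,\calD)=p_j$, $p_j\geq \delta w_j$, $p_j\leq\eps$, and $d(v,\calC)\leq\sum_{j:\alpha_j(v)=1}w_j$) is correct and matches the paper's Lemma~\ref{lem:ltc-coset} framing, and the easy case $\max_j w_j\geq kd/(3n)$ is handled correctly. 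But the hard case is where the entire content of the theorem lives, and there you have a plan rather than a proof: both quantitative claims about $v^{*}$ are unestablished. The distance claim ``$d(v^{*},\calC)$ scales like $kd$'' is impossible as stated whenever $kd>n$ (e.g.\ asymptotically good codes, where $kd/n=\Theta(n)$ but $d(v^{*},\calC)\leq n$), and even the corrected target $d(v^{*},\calC)=\Omega(d)$ is not argued: summing half-supports taken from $k$ generators controls overlaps only on the $k$ systematic coordinates, and nothing prevents the sum from landing near a codeword. The rejection claim $\Rej(v^{*},\calD)=O((n/k)\eps)$ is, as you yourself say, the crux, and it is simply not proved; without it the computation collapses to the union bound, which gives only $\delta\leq\eps$ with no dependence on $k$ or $d$.

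The gap is essential rather than cosmetic. For testers that are not bases the statement is false (strong LTCs can have $\eps/\delta=O(1)$ while $kd/n$ is huge), so the factor $kd/n$ must come precisely from the fact that there are only $n-k$ linearly independent tests; your sketch never converts that fact into a quantitative saving. In particular, the even-intersection observation inside a single minimum-weight codeword support provably gains only a constant factor: for $c\in\calC$ with $|c|=d$ and a random $T\subseteq\supp(c)$, each test meeting $\supp(c)$ fires with probability at most $1/2$, giving $\E_T[\Rej(\mathbf{1}_T,\calD)]\leq d\eps/4$ against $d(\mathbf{1}_T,\calC)\approx d/2$, i.e.\ still only $\delta\lesssim\eps$; the same cancellation applied to $k$ codewords, by itself, does not yield the claimed factor-$k$ improvement, and no mechanism is proposed that uses linear independence of the $\alpha_j$ (as opposed to their mere number) to rule out the tester concentrating mass exactly where $v^{*}$ is caught. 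So what you have is a correct reformulation of the problem in the coset/embedding language (consistent with the paper's remark that a basis tester is a linear embedding into $(n-k)$-dimensional space), plus an acknowledged open step that is exactly the theorem of \cite{BGKSV}; as a proof it is incomplete.
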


If $\calC$ is an $[n,k,d]_2$ code, a basis
tester for $\calC$ yields an embedding into $(n-k)$-dimensional
space. Hence their result implies that any linear embedding of
$\F_2^n/\calC$ into $(n-k)$-dimensional space requires distortion
$\Omega(kd/n)$ (even though $\F_2^n/\calC$ has dimension $n-k$ as a
vector space over $\F_2$), and hence low-distortion embeddings must
have larger support.  Note that since our reduction from arbitrary embeddings
to linear embeddings could blow up the support, this does
not imply a similar bound for arbitrary embeddings.

\eat{
Our results allow for an alternative derivation of Theorem
\ref{thm:kn} using the classical lower bound for embeddings using the
spectral gap (see for instance \cite[Chapter 15]{Matousek}. We outline the proof below:
\begin{enumerate}
\item The eigenvalue spectrum of $\calG$ is determined by the weight
  distribution of $\calC^\perp$. Each codeword $\alpha \in
  \calC^\perp$ corresponds to an eigenvalue $\lambda(\alpha) = 1
  -2\wt(\alpha)/n$. Thus the spectral gap is $2d^\perp/n$.
\item When $h = \Omega(n)$, it is easy to see that average pairwise
  distance of vertices in $\calG$ is $\Omega(n)$, since the number of
  vertices within a ball of radius $t$ around a vertex is bounded by
\[ {n \choose t} \ll 2^h \] 
for a suitable choice of $t = \Omega(n)$. 
\item If we now consider the uniform distribution on edges versus the
  uniform distribution on all pairs of vertices, by (1) we have
\[ \frac{\E_{(x,y) \in E(\calG)}[\delta(x,y)]}{\E_{(x,y) V(\calG) \times
    V(\calG)}[\delta(x,y)]} \geq \frac{2d^\perp}{n}\]
whereas by (2) we have
\[ \frac{\E_{(x,y) \in E(\calG)}[d(x,y)]}{\E_{(x,y) V(\calG) \times
    V(\calG)}[d(x,y)]} \leq O(1/n)\]
which implies that $c_1(\calG) \geq \Omega(d^\perp)$.
\end{enumerate}
}

\section{Locally Testable Codes and Derandomized Hypercubes}

\subsection{Derandomized Hypercubes}

We consider Cayley graphs over groups of characteristic $2$. Let
$\calA = \F_2^h$ for some $h > 0$.  A distribution $\calD$ over $\calA$ gives rise to a
weighted graph $\Cay(\calA,\calD)$ where the weight of edge $(\alpha,\beta)$
equals $\calD(\alpha + \beta)$. 

The symmetry of Cayley graphs makes it easy to compute their
eigenvectors and eigenvalues explicitly. Let $\calA^*$ denote the space of
all linear functions $b: \calA \rgta \F_2$. The characters of the group $\calA$
are in $1$-$1$ correspondence with linear functions: $b \in \calA^*$ corresponds to a
character $\chi_b: \calA \rgta \pmo$  given by $\chi_b(\alpha) = (-1)^{b(\alpha)}$. 
The eigenvectors of $\Cay(\calA,\calD)$ are precisely the characters $\{\chi_b\}_{b
  \in \calA^*}$. The corresponding eigenvalues are given by
$\lambda(b) = \E_{\alpha \in \calD}[\chi_b(\alpha)]$.

\eat{We view the set $\calA^*$ of linear functions as an $\F_2$ linear
space, the set can have linear dependencies  over $\F_2$. }

As mentioned earlier, the Cayley graphs we are interested in can be
viewed as derandomizations of the $\eps$-noisy hypercube, which retain
many of the nice spectral properties of the Boolean hypercube. Before
defining them formally, we list these properties that we would like
preserved (at least approximately). 

\begin{enumerate}
\item {\bf Large Eigenvalues.} There are $h$ ``top'' eigenvectors
  $\{\chi_{e_i}\}_{i=1}^h$ whose eigenvalues satisfy
  $\lambda(e_i) \geq 1- \eps$. 
\item {\bf Linear Independence.} The linear functions $\{e_1,\ldots,e_h\}$
  corresponding to the top eigenvectors are linearly independent over $\F_2$.
\item {\bf Spectral Decay.} For $a \in \calA^*$, if $a = \sum_{i \in
  S} e_i$, then $\lambda(a) \leq  (1 - \eps)^{|S|}$.
\end{enumerate}

\eat{
These properties are sufficient to guarantee $2-4$ hypercontractivity,
which gives small-set expansion and more. Hypercontractivity is at the heart of many deep
results in the analysis of Boolean functions, including the
Kahn-Kalai-Linial Theorem (KKL) and Freidgut's Junta Theorem.}

We are interested in Cayley graphs whose threshold rank $n$ is
possibly (much) larger than $h$. But this means that the corresponding dual vectors which
lie in the space $\calA^*$ of  dimension $h < n$ can no longer be linearly independent. So we
relax the Linear Independence condition, and only ask that there
should be no {\em short} linear dependencies between these
vectors. The Spectral Decay condition will stay the same, except that
we need to modify the notion of rank to account for linear
dependencies.

\begin{Def} \label{def:cayley-spectrum}
Let $\Cay(\calA,\calD)$ be a Cayley graph on the group $\calA = \F_2^h$.
Let $\mu, \nu \in [0,1]$ and $d \in \{1,\ldots,n\}$. 
Let $\calB^* = \{b_1,\ldots,b_n\}$ be a $d$-wise independent set
of generators for $\calA^*$ of cardinality $n$. We say that $\calB^*$ is a
$(\mu,\nu)$-spectrum generator for $\Cay(\calA,\calD)$ if it satisfies the following properties:
\begin{itemize}
\item {\bf Large Eigenvalues.} $\lambda(b) \geq 1 - \mu$ for every  $b \in \calB^*$.

\item {\bf Spectral Decay.} For $a \in \calA^*$, $\lambda(a) \leq 1 - \nu \cdot \rk_{\calB^*}(a)$. 
\end{itemize}
\end{Def}

Note that any set of generators $\calB^*$ for $\calA^*$ gives us some
values of $n,d,\mu$ and $\nu$. We would like $n, d$ to be large. Also,
applying the Spectral decay condition to $b \in \calB^*$, we see that  
$$  1 - \mu \leq \lambda(b) \leq  1 - \nu$$ 
hence $\mu \geq \nu$. Ideally, we would like
them to be within a constant factor of each other.

We refer to such graphs as ``derandomized hypercubes''. The reason is
that if there is a generating set of size $n$ which is significantly
larger than the dimension $h$, then the resulting
graph has spectral properties that resemble the $n$
dimensional hypercube, although it has only $2^h \ll 2^n$
vertices. Every Cayley graph $\Cay(\calA,\calD)$ together with a
generating set $\calB^*$ gives us a derandomized hypercube, the
parameters $n, d,\mu, \nu$ tell us how good the derandomization is
(just like any code $\calC$ and dual distribution $\calD$ gives us
local tester, whose quality is governed by the parameters it achieves).

\subsection{Derandomized hypercubes from Locally Testable Codes}

Barak et al. proposed the following construction of Derandomized Hypercubes from any Locally Testable Code \cite{BGH+12}.
Given $\calC$ which is an $[n,k,d]_2$ linear code with a local
tester $\calD$, they consider the Cayley graph $C(\calC^\perp,\calD)$
on $\calC^\perp \cong\F_2^{n-k}$   whose edge weights are distributed
according to  $\calD$.

\begin{Thm}
\label{thm:short-code}
Let $\calC$ be an $[n,k,d]_2$ linear code for $d \geq 3$, and let
$\calD$ be an $(\eps,\delta)$-tester for $\calC$. There
exists a $d$-wise independent set $\bcalE$ of size $n$ which is a $(2\eps,2\delta)$-spectrum generator for  $\Cay(\calC^\perp,\calD)$.
\end{Thm}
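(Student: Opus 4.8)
The plan is to take $\bcalE = \{\bar e_1,\ldots,\bar e_n\}$, the cosets of the standard basis vectors of $\F_2^n$, and verify directly that this family is the claimed spectrum generator for $\Cay(\calC^\perp,\calD)$, reading off each of the three requirements of Definition~\ref{def:cayley-spectrum} from Lemma~\ref{lem:ltc-coset} together with the Fourier formula for the eigenvalues.

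The one conceptual point to set up is the identification of spaces. Here $\calA = \calC^\perp \cong \F_2^{n-k}$, so the dual $\calA^*$ of $\F_2$-linear functionals on $\calC^\perp$ is canonically identified with $\F_2^n/\calC = \cosC$: a coset $\bar v$ acts on $\calC^\perp$ by $\alpha \mapsto \alpha\cdot v$, which is well defined because $\alpha\cdot c = 0$ for $c\in\calC$, and which is the zero functional exactly when $\bar v = 0$ in $\cosC$. Under this identification, $\chi_{\bar v}(\alpha) = (-1)^{\alpha\cdot v}$, so (using that $\calD$ is supported on $\calC^\perp$, so that the expectation below depends only on $\bar v$)
\[
\lambda(\bar v) \;=\; \E_{\alpha\samp\calD}\bigl[(-1)^{\alpha\cdot v}\bigr] \;=\; 1 - 2\Pr_{\alpha\samp\calD}[\alpha\cdot v = 1] \;=\; 1 - 2\,\Rej(\bar v,\calD).
\]
In particular $\lambda(\bar e_i) = 1 - 2\,\Rej(\bar e_i,\calD) = 1 - 2\Pr_{\alpha\samp\calD}[\alpha_i = 1]$.

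With this dictionary in hand, the three conditions follow from Lemma~\ref{lem:ltc-coset}. The set $\bcalE$ spans $\cosC = \calA^*$ since $e_1,\ldots,e_n$ span $\F_2^n$; it is $d$-wise independent because $\calC$ has distance $d$; and since $d\geq 3$ the $\bar e_i$ are $n$ distinct nonzero elements, so $|\bcalE| = n$. For Large Eigenvalues, $\eps$-smoothness gives $\Pr_{\alpha\samp\calD}[\alpha_i = 1]\leq\eps$, hence $\lambda(\bar e_i)\geq 1 - 2\eps$, so $\mu = 2\eps$ works. For Spectral Decay, soundness $\delta$ gives $\Rej(\bar v,\calD)\geq \delta\cdot\rk_{\bcalE}(\bar v)$ for every $\bar v\in\cosC$, whence $\lambda(\bar v)\leq 1 - 2\delta\cdot\rk_{\bcalE}(\bar v)$, so $\nu = 2\delta$ works. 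Thus $\bcalE$ is a $(2\eps,2\delta)$-spectrum generator.

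There is no real obstacle beyond bookkeeping: once the canonical isomorphism $\calA^*\cong\F_2^n/\calC$ and the identity $\lambda(\bar v) = 1 - 2\,\Rej(\bar v,\calD)$ are in place, each condition is a one-line transcription of the corresponding clause of Lemma~\ref{lem:ltc-coset}. (The hypothesis $d\geq 3$ is used only to guarantee that the $\bar e_i$ are genuinely distinct and nonzero, so that $\bcalE$ is literally a set of cardinality $n$, matching the phrasing of Definition~\ref{def:cayley-spectrum}.)
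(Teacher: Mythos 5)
Your proposal is correct and follows essentially the same route as the paper: take $\bcalE=\{\bar e_1,\ldots,\bar e_n\}$, identify $(\calC^\perp)^*\cong\F_2^n/\calC$, use the identity $\lambda(\bar v)=1-2\,\Rej(\bar v,\calD)$ (which the paper cites from Barak et al.\ rather than deriving inline, as you do), and transcribe smoothness, soundness, and distance via Lemma~\ref{lem:ltc-coset} into the large-eigenvalue, spectral-decay, and $d$-wise independence conditions. Your extra remark on where $d\geq 3$ is used is a harmless bookkeeping addition, not a divergence.
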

\begin{proof}
Observe that $(\calC^\perp)^* \cong \cosC$. 
This is because each  $v \in \calV$ defines a linear
function on $\calC^\perp$ given by $v(\alpha) = \alpha(v)$, and 
$v,v'$ define the same function iff the lie in the same coset
of $\calC$. 

We take $\bcalE = \{\bar{e}_1,\ldots,\bar{e}_n\}$ to be the cosets
corresponding to the received words $e_1,\ldots,e_n$. By Lemma \ref{lem:ltc-coset}, since $\calC$ has distance $d$,
the set $\bcalE$ is $d$-wise independence. We will show that it is a $(2\eps,2\delta)$-spectrum generator for
$\Cay(\calC^\perp,\calD)$.

We bound the eigenvalues
using the correspondence between the spectrum of $\Cay(\calC^\perp,\calD)$ and the soundness of the tester
$\calD$ established by Barak \etal. For $\bar{v} \in \cosC$, let $\chi_{\bar{v}}$ and $\lambda(\bar{v})$ denote the corresponding eigenvalue.  
\cite[Lemma 4.5]{BGH+12} says that
\begin{align}
\label{eq:eval}
\lambda(\bar{v}) = 1- 2\Rej(\bar{v},\calD).
\end{align}

\begin{itemize}
\item {\bf Smoothness implies Large Eigenvalues.} By Lemma
  \ref{lem:ltc-coset} the smoothness of $\calD$ implies $\Rej(\bar{e}_i,\calD) \leq \eps$. 
By Equation \ref{eq:eval}, 
\begin{align*}
\lambda(\bar{e}_i) = 1- 2\Rej(e_i,\calD) \geq 1 - 2\eps.
\end{align*}

\item {\bf Soundness implies Spectral decay.} 
Fix $\bar{v} \in \calV/\calC$ so that $\rk_{\calB^*}(\bar{v}) \geq
d'$. By Lemma \ref{lem:ltc-coset}, the soundness of $\calD$ implies
$\Rej(\bar{v},\calD) \geq \delta d'$.   
By Equation \ref{eq:eval}, 
\begin{align*}
\lambda(\bar{v})  = 1 - 2\Rej(\bar{v},\calD) \leq 1 - 2\delta d'.
\end{align*}
\end{itemize}

\end{proof}

\subsection{Locally Testable Codes from Derandomized Hypercubes}

We show how to start from a Cayley graph on $\calA = \F_2^h$ and a set
of generators for $\calA^*$ and get a locally testable code from
it. Our construction takes a Cayley graph $\Cay(\calA,\calD')$ and a
$(d,\mu,\nu)$-spectrum generator  $\calB^*$.

We define the locally testable code $\calC$ by specifying the dual
code $\calC^\perp$ and the tester $\calD$. We view elements $\alpha \in
\calA$ as  messages, and embed them into $\F_2^n$ using the map
\begin{align}
\label{eq:dual-code}
f(\alpha) = (b_1(\alpha),\ldots,b_n(\alpha)).
\end{align}
Since $\calB^*$ generates $\calA^*$, the mapping $f$ is injective. Its
image is a $h$-dimensional subspace of $\F_2^n$ which we denote by
$\calC^\perp$. The LTC will be $\calC$, which is the dual of $\calC^\perp$.
The distribution $\calD'$ on $A$ induces a distribution
$\calD = f(\alpha)_{\alpha \in \calD'}$ on $\calC^\perp$, which is the tester for $\calC$. 

\begin{Thm}
Let $\Cay(\calA = F_2^h,\calD')$ be a Cayley graph  and let $\calB^* =
\{b_1,\ldots,b_n\}$ be a $d$-wise independent $(\mu,\nu)$-spectrum
generator for it. Let $\calC$ be the dual of the code specified by
Equation \ref{eq:dual-code}. Then $\calC$ is an $[n, n-h,d]_2$ linear
code and $\calD$ is a $(\mu/2,\nu/2)$-tester for $\calC$.
\end{Thm}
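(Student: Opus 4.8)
The plan is to show that this construction is exactly the inverse of the one in Theorem~\ref{thm:short-code}, so that all of the claimed properties fall out of the same dictionary between spectra and testers, read in the opposite direction.

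First I would pin down the linear algebra. Since $\calB^*$ spans $\calA^*$, the evaluation map $f(\alpha)=(b_1(\alpha),\dots,b_n(\alpha))$ is injective, so $\calC^\perp=f(\calA)$ has dimension $h$ and therefore $\calC=(\calC^\perp)^\perp$ has dimension $n-h$. A linear functional on $\F_2^n$ is of the form $\beta\mapsto v\cdot\beta$ for some $v\in\F_2^n$, and it vanishes on all of $\calC^\perp$ precisely when $v\in(\calC^\perp)^\perp=\calC$; hence $(\calC^\perp)^*\cong\F_2^n/\calC$, with the coset $\bar v$ acting on $\calC^\perp$ by $\beta\mapsto v\cdot\beta$. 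Pulling this back through $f$, the coset $\bar v$ corresponds to the functional $\alpha\mapsto v\cdot f(\alpha)=\sum_i v_i b_i(\alpha)$ on $\calA$, i.e. to $\sum_i v_i b_i\in\calA^*$. This gives an $\F_2$-isomorphism $\F_2^n/\calC\to\calA^*$ with $\bar v\mapsto\sum_i v_i b_i$, and in particular $\bar e_i\mapsto b_i$; so it identifies the generating multiset $\bcalE=\{\bar e_1,\dots,\bar e_n\}$ with $\calB^*$, and consequently preserves $d$-wise independence and the rank function, giving $\rk_{\bcalE}(\bar v)=\rk_{\calB^*}\!\bigl(\sum_i v_i b_i\bigr)$ for all $\bar v$.

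With this dictionary in hand the rest is substitution. For the distance: $\calB^*$ is $d$-wise independent by hypothesis, hence so is $\bcalE$, and Lemma~\ref{lem:ltc-coset} then shows $\calC$ has distance $d$, so (with the dimension count) $\calC$ is an $[n,n-h,d]_2$ code. For the tester: identifying $\calA$ with $\calC^\perp$ via $f$ turns $\Cay(\calA,\calD')$ into $\Cay(\calC^\perp,\calD)$ (note $\calD$ is supported on $\calC^\perp$ since $f(\alpha)\in\calC^\perp$, so it is a legitimate tester distribution), and for $b=\sum_i v_i b_i\leftrightarrow\bar v$ one expands $\Rej(\bar v,\calD)=\Pr_{\alpha\gets\calD'}[b(\alpha)=1]=\tfrac12\bigl(1-\E_{\alpha\gets\calD'}[\chi_b(\alpha)]\bigr)=\tfrac12(1-\lambda(b))$, which is just Equation~\eqref{eq:eval} read in this direction. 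Applying the \textbf{Large Eigenvalues} property to $b=b_i$ gives $\Rej(\bar e_i,\calD)=\tfrac12(1-\lambda(b_i))\le\mu/2$ for every $i$, so by Lemma~\ref{lem:ltc-coset} the tester $\calD$ is $(\mu/2)$-smooth; applying the \textbf{Spectral Decay} property to a general $b=\sum_i v_i b_i$ gives $\Rej(\bar v,\calD)=\tfrac12(1-\lambda(b))\ge\tfrac{\nu}{2}\rk_{\calB^*}(b)=\tfrac{\nu}{2}\rk_{\bcalE}(\bar v)$, which by Lemma~\ref{lem:ltc-coset} is exactly soundness $\nu/2$. Hence $\calD$ is a $(\mu/2,\nu/2)$-tester for $\calC$.

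I expect the only genuine work to be the first step: getting the duality bookkeeping right so that the coordinate functionals $\bar e_i$ on $\calC^\perp$ line up with the prescribed generators $b_i$, and verifying that this identification is both independence- and rank-preserving. Everything afterward --- distance, smoothness, soundness --- is the three defining properties of a $(\mu,\nu)$-spectrum generator restated through Lemma~\ref{lem:ltc-coset} and the identity $\Rej(\bar v,\calD)=\tfrac12(1-\lambda(b))$.
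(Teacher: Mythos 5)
Your proposal is correct and follows essentially the same route as the paper: identify $\F_2^n/\calC$ with $\calA^*$ via the map induced by $f$ (sending $\bar e_i$ to $b_i$), transport $d$-wise independence and rank through this identification, and then read off distance, smoothness, and soundness from Lemma~\ref{lem:ltc-coset} together with $\Rej(\bar v,\calD)=\tfrac12(1-\lambda(b))$. The only difference is cosmetic (you invert the paper's isomorphism $g$), and you even state the smoothness inequality in the correct direction, $\Rej(\bar e_i,\calD)\le\mu/2$, where the paper's displayed inequality has a sign typo.
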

\begin{proof}
It is clear that $\calC^\perp$ is an $[n,h]_2$ code, and hence $\calC$
is an $[n,n- h]_2$ code. Recall that $f:\calA \rgta \calC^\perp$ is an
isomorphism. Since $\F_2^n/\calC \cong (\calC^\perp)^*$, $f$ induces 
an isomorphism $g:\calA^* \rgta \F_2^n/\calC$, with
property that for $a \in \calA^*$ and $\alpha \in \calA$, 
\begin{align}
\label{eq:def-g}
g(a)(f(\alpha)) = a(\alpha).
\end{align}

We observe that $g(b_i) = \bar{e}_i$, since
\begin{align*}
\bar{e}_i(f(\alpha)) = e_i \cdot f( \alpha) = b_i(\alpha).
\end{align*}
Since $\calB^*$ is $d$-wise independent, so is $\bcalE$, which by
Lemma \ref{lem:ltc-coset} implies that $\calC$ has distance $d$.
This also implies that for any $a \in \calA^*$,
\begin{align*}
a= \sum_{i \in S}b_i \iff g(a)  = \sum_{i \in S}\bar{e}_i 
\end{align*}
Hence by Lemma \ref{lem:ltc-coset}, we have $d(g(a),\calC) =\rk_{\calB^*}(a)$.
We can now deduce the local testability of $\calC$ from the
spectral properties of $\calB^*$.

\begin{itemize}

\item {\bf Large Eigenvalues Imply Smoothness.}
In order to bound the smoothness of $\calD$ we need to bound
\begin{align*}
\Rej(\bar{e}_i,\calD) = \Pr_{\alpha \in \calD'}[\bar{e}_i\cdot
  f(\alpha) = 1] = \Pr_{\alpha \in \calD'}[b_i(\alpha)  = 1]
\end{align*}
We have
\begin{align*}
1 - \mu \leq \lambda(b_i) = \E_{\alpha \in
  \calD}[(-1)^{b_i(\alpha)}] = 1 -  2\Pr_{\alpha \in
  \calD}[b_i(\alpha) =1]
\end{align*}
which implies that 
\begin{align*}
\Rej(\bar{e}_i,\calD) \geq \frac{\mu}{2}.
\end{align*}

\item {\bf Spectral decay implies soundness.} 

Consider $\bar{v}\in \F_2^n/\calC$ such that $d(\bar{v},\calC)\geq
d'$. Let $\bar{v} = g(a)$ for $a\in \calA^*$, so that
$\rk_{\calB^*}(g(a)) \geq d'$. From the spectral decay
property of $\calB^*$,
\begin{align*}
1 - \nu d' \geq \lambda(a) = \E_{\alpha \in
  \calD'}[(-1)^{a(\alpha)}] = 1 -  2\Pr_{\alpha \in
  \calD'}[a(\alpha) =1] 
\end{align*}
hence
\begin{align*}
\Pr_{\alpha \in \calD'}[a(\alpha) =1] \geq \frac{\nu d'}{2}.
\end{align*}
The soundness of the tester follows by noting that
\begin{align*}
\Rej(\bar{v},\calD) = \Pr_{\alpha \in \calD'}[g(a)\cdot f(\alpha)
  =1] =  \Pr_{\alpha \in \calD'}[a(\alpha) =1]
\end{align*}
\end{itemize}
\end{proof}

\subsection{Some consequences of this equivalence}
\label{sec:sse-cons}

This equivalence lets us reformulate questions regarding LTCs as
questions regarding the existence of certain families of derandomized
hypercubes.  

\begin{Cor}
\label{cor:sse-asymptotic}
There exists an asymptotically good family of codes $\{\calC_n\}$
where $\calC_n$ has blocklength $n$ and is $(O(1/n),\Omega(1/n))$-locally testable iff for infintitely many $h$ there
exists a Cayley graph $\calG_h = \Cay(\F_2^h,\calD)$ and a set $\calB^*$ of generators for $(\F_2^h)^*$ such that
\begin{itemize}
\item the elements of $\calB^*$ are $(\rho_0h)$-wise independent for $\rho_0 >0$,
\item $|\calB^*| \geq \rho_1 h$ for $\rho_1 > 1$,
\item $\calB^*$ is an $(O(1/h),\Omega(1/h))$-spectrum generator for $\calG_h$.
\end{itemize}
\end{Cor}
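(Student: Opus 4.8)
The plan is to read this corollary off the two equivalences proved in this section: Theorem~\ref{thm:short-code} turns a locally testable code into a Cayley graph with a spectrum generator, and the theorem of the previous subsection (LTCs from derandomized hypercubes) goes the other way, each losing only a factor of $2$ in the relevant parameters. So the whole content is to verify that the two parameter regimes correspond. The single fact that makes this work in both directions is that, in this regime, $h = \Theta(n)$.

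\emph{From codes to graphs.} Given an asymptotically good $(O(1/n),\Omega(1/n))$-locally testable family $\{\calC_n\}$, I would set $h = n-k$ for each $n$ and apply Theorem~\ref{thm:short-code} to $\calC_n$ with its $(\eps,\delta)$-tester $\calD$, obtaining a $d$-wise independent $(2\eps,2\delta)$-spectrum generator $\bcalE$ of size $n$ for $\Cay(\calC_n^\perp,\calD) = \Cay(\F_2^h,\calD)$. By the Singleton bound $h = n-k \geq d-1 = \Omega(n)$, while $k = \Omega(n)$ gives $h \leq (1-c)n$ for a constant $c > 0$; thus $h = \Theta(n)$. Hence $|\bcalE| = n \geq h/(1-c) = \rho_1 h$ with $\rho_1 = 1/(1-c) > 1$; the set $\bcalE$ is $d$-wise independent with $d = \Omega(n) = \Omega(h) \geq \rho_0 h$; and $(2\eps,2\delta) = (O(1/n),\Omega(1/n)) = (O(1/h),\Omega(1/h))$. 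Since $h = n-k \geq d-1 \to \infty$ as $n \to \infty$, this produces graphs for infinitely many $h$.

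\emph{From graphs to codes.} Conversely, given for infinitely many $h$ a Cayley graph $\calG_h = \Cay(\F_2^h,\calD)$ with a $(\rho_0 h)$-wise independent $(\mu,\nu)$-spectrum generator $\calB^*$ of size $n \geq \rho_1 h$, $\mu = O(1/h)$, $\nu = \Omega(1/h)$, the previous theorem produces an $[n,\,n-h,\,\rho_0 h]_2$ code $\calC$ with a $(\mu/2,\nu/2)$-tester. The step deserving a word of justification is $n = O(h)$, which is what makes the distance $\rho_0 h$ be $\Omega(n)$: the radius-$\lfloor(\rho_0 h-1)/2\rfloor$ Hamming balls around the $2^{n-h}$ codewords of $\calC$ are pairwise disjoint, so $\binom{n}{\lfloor(\rho_0 h-1)/2\rfloor} \leq 2^h$, which for constant $\rho_0$ forces $n \leq O(h)$. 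Combined with $n \geq \rho_1 h$ this yields $h = \Theta(n)$, so $\calC$ has rate $k = n-h = \Omega(n)$ (since $h/n \leq 1/\rho_1 < 1$), distance $\rho_0 h = \Omega(n)$, smoothness $\mu/2 = O(1/h) = O(1/n)$, and soundness $\nu/2 = \Omega(1/h) = \Omega(1/n)$; and $n \geq \rho_1 h \to \infty$ gives an infinite family.

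The only non-routine ingredient is the sphere-packing observation that $\Omega(h)$-wise linear independence inside an $h$-dimensional ambient space forces $n = \Theta(h)$; without it one could not rule out a Cayley graph with $n = h^{\omega(1)}$, whose associated code would have sub-constant relative distance. Everything else is bookkeeping with the ratios, and the factor-$2$ losses of Theorem~\ref{thm:short-code} and its converse are harmless in the $O(\cdot)$/$\Omega(\cdot)$ notation once $h = \Theta(n)$ is established.
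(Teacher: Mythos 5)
Your proposal is correct and follows the route the paper intends: the corollary is stated as an immediate consequence of Theorem~\ref{thm:short-code} and its converse, with the remaining work being exactly the parameter bookkeeping you carry out. Your explicit sphere-packing observation that $(\rho_0 h)$-wise independence forces $n=O(h)$ (hence $h=\Theta(n)$, which is what makes $d=\rho_0 h=\Omega(n)$ and $O(1/h)=O(1/n)$) is a step the paper leaves implicit but is indeed needed, and you supply it correctly.
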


\begin{Cor}
\label{cor:sse-constant}
Let $d \geq 3$. There exists an asymptotic family of
codes $\{\calC_n\}$ where $\calC_n$ has parameters $[n,n - c_d\log
  n,d]_2$ and is $(\eps,\Omega(\eps))$-locally testable iff for infinitely many $h$ 
there exists a Cayley graph $\calG_h = \Cay(\F_2^h,\calD)$ and a set $\calB^*$ of generators for $(\F_2^h)^*$ such that 
\begin{itemize}
\item the elements of $\calB^*$ are $d$-wise independent,
\item $|\calB^*| \geq 2^{h/c_d}$,
\item $\calB^*$ is an $(\eps,\Omega(\eps))$-spectrum generator for $\calG_h$.
\end{itemize}
\end{Cor}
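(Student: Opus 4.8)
The plan is to derive this corollary as a direct specialization of the two equivalence theorems of this section --- Theorem~\ref{thm:short-code} (derandomized hypercubes from LTCs) and its converse above (LTCs from derandomized hypercubes) --- together with the elementary translation of code parameters into graph parameters. The key numerical identity is that an $[n, n - c_d\log n, d]_2$ code has $h = n - k = c_d \log n$, equivalently $n = 2^{h/c_d}$; since all our constructions produce a generator set $\calB^*$ (or $\bcalE$) of size exactly $n$, the condition ``$|\calB^*| = n = 2^{h/c_d}$'' is precisely what the ``$|\calB^*| \geq 2^{h/c_d}$'' hypothesis encodes (up to the fact that the converse direction only forces $h \leq c_d \log n$; see below). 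Corollary~\ref{cor:sse-asymptotic} would be obtained the same way, substituting the asymptotically-good parameters.

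For the ``only if'' direction, I would take the given family $\{\calC_n\}$, and for each $n$ let $\calD_n$ be an $(\eps, \delta)$-tester with $\delta = \Omega(\eps)$. Since $d \geq 3$, Theorem~\ref{thm:short-code} yields a $d$-wise independent set $\bcalE_n$ of size $n$ that is a $(2\eps, 2\delta)$-spectrum generator for $\calG_n := \Cay(\calC_n^\perp, \calD_n)$. Identifying $\calC_n^\perp$ with $\F_2^h$ for $h = n - k = c_d\log n$, this is a Cayley graph over $\F_2^h$, the set $\calB^* := \bcalE_n$ has cardinality $n = 2^{h/c_d}$, it is $d$-wise independent, and $(2\eps, 2\delta) = (\eps', \Omega(\eps'))$. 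As $n \to \infty$ along the family, $h = c_d\log n$ takes infinitely many values, giving the desired graphs.

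For the ``if'' direction, given such $\calG_h$ and a $d$-wise independent $(\eps, \Omega(\eps))$-spectrum generator $\calB^*$, set $n = |\calB^*|$ and apply the converse construction: the code $\calC$ obtained by dualizing the embedding $f$ of Equation~\eqref{eq:dual-code} is an $[n, n - h, d]_2$ code, and the induced distribution $\calD$ is a $(\mu/2, \nu/2)$-tester, i.e. an $(\eps/2, \Omega(\eps)/2) = (\eps', \Omega(\eps'))$-tester. From $n \geq 2^{h/c_d}$ we get $h \leq c_d \log n$, so the dimension is $k = n - h \geq n - c_d \log n$; thus $\calC$ has at least the claimed rate (one may read the corollary with ``$\geq$'' in the rate, or note that an adjustment of $c_d$ recovers the stated form). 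Ranging $h$ over its infinitely many admissible values gives an infinite family of blocklengths $n \geq 2^{h/c_d} \to \infty$.

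The routine but slightly delicate point --- the one I would be most careful about --- is the bookkeeping between ``spectrum-generator'' parameters and ``tester'' parameters across the two constructions: each crossing costs a factor of $2$, so $(\mu, \nu)$ on one side becomes $(2\mu, 2\nu)$ or $(\mu/2, \nu/2)$ on the other, which preserves the shape ``$\delta = \Omega(\eps)$'' / ``$\nu = \Omega(\mu)$'' but not the exact constants; and there is a mild mismatch between the exact equality $n = 2^{h/c_d}$ produced in the forward direction and the inequality $h \leq c_d \log n$ only guaranteed in the converse. Neither is a genuine obstacle; the substance of the corollary is entirely contained in the two theorems it cites.
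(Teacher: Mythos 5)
Your derivation is correct and matches the paper's intent exactly: the paper gives no separate proof of Corollary~\ref{cor:sse-constant}, treating it as the immediate specialization of Theorem~\ref{thm:short-code} and its converse with $h=n-k=c_d\log n$, i.e.\ $n=2^{h/c_d}$, which is precisely your argument. Your remarks on the factor-of-$2$ parameter bookkeeping and on reading the rate as ``at least $n-c_d\log n$'' (or adjusting $c_d$) in the converse direction are the right way to handle the only loose ends in the stated corollary.
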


Next we show that derandomized hypercubes are small-set expanders.
We say that a regular graph $G$ with $n$ vertices is a
$(\tau, \phi)$-expander if for every set $S$ of at most $\tau n$
vertices, at least a fraction $\phi$ of the edges incident to $S$
leave $S$ (i.e. are on the boundary between $S$ and
$\overline{S}$). 

The following lemma says that if a graph has a  $(\mu, \nu)$ spectrum generator, then it is a $(\tau, \phi)$-expander
for appropriately chosen $\tau$ and $\phi$. The lemma is proved in
\cite{BGH+12}. Since our terminology and notation is different, we
present a proof of the Lemma in Appendix \ref{app:hypercon}.

\begin{Lem}\cite{BGH+12}
\label{lem:hypercon-sse}
Let $G = \Cay(\calA,\calD)$ be a Cayley graph on the group $\calA =
\F_2^h$. Let $\calB^*$ be a $d$-wise independent set which is a
$(\mu,\nu)$-spectrum generator for $G$. Then $G$ is a $(\tau,\phi_\tau)$ 
expander for $\phi_\tau = \nu d/4 - 3^{d/2} \tau^{1/4}$.
\end{Lem}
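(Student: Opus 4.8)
The plan is to prove this by Fourier analysis on $\calA=\F_2^h$, following the standard route from spectral/hypercontractive properties to small-set expansion. Fix a set $S\subseteq\calA$ with $\mu(S)=|S|/2^h\le\tau$; we must bound $\phi(S)$ from below. Let $M$ be the random-walk matrix of $G=\Cay(\calA,\calD)$, so the characters $\{\chi_a\}_{a\in\calA^*}$ are its eigenvectors with eigenvalues $\lambda(a)$, and write $\mathbf{1}_S=\sum_a\widehat{\mathbf{1}_S}(a)\chi_a$. Using $\sum_a\widehat{\mathbf{1}_S}(a)^2=\mu(S)$ and $M\chi_a=\lambda(a)\chi_a$,
\[
\phi(S)\;=\;1-\frac{\langle\mathbf{1}_S,M\mathbf{1}_S\rangle}{\langle\mathbf{1}_S,\mathbf{1}_S\rangle}\;=\;1-\frac{1}{\mu(S)}\sum_{a\in\calA^*}\lambda(a)\,\widehat{\mathbf{1}_S}(a)^2 .
\]
Thus $\phi(S)$ is $1$ minus a weighted average of the $\lambda(a)$. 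We split the characters into those of small and large $\calB^*$-rank: the large-rank ones are controlled by Spectral Decay, and the fact that a small set carries almost no Fourier weight on the small-rank ones is a hypercontractivity (KKL-type) statement, for which the $d$-wise independence of $\calB^*$ is exactly what is needed.

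First I would fix a rank threshold $r$, the largest integer with $4r\le d$ (so $r=\Theta(d)$), and split $\sum_a=\sum_{\rk_{\calB^*}(a)<r}+\sum_{\rk_{\calB^*}(a)\ge r}$. On the large-rank part, Spectral Decay gives $\lambda(a)\le 1-\nu\,\rk_{\calB^*}(a)\le 1-\nu r$; on the small-rank part use the trivial $\lambda(a)\le 1$. Writing $W$ for the fraction of the Fourier weight of $\mathbf{1}_S$ carried by characters of $\calB^*$-rank $<r$, i.e. $W\mu(S)=\sum_{\rk_{\calB^*}(a)<r}\widehat{\mathbf{1}_S}(a)^2$, these two bounds combine to give $\phi(S)\ge\nu r\,(1-W)$, so it remains to show $W$ is small when $\mu(S)$ is small.

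For that, let $g=\sum_{\rk_{\calB^*}(a)<r}\widehat{\mathbf{1}_S}(a)\,\chi_a$ be the projection of $\mathbf{1}_S$ onto the small-rank characters, so $\|g\|_2^2=W\mu(S)$ and, since $g$ is a Fourier-truncation of $\mathbf{1}_S$, $\|g\|_2^2=\langle g,\mathbf{1}_S\rangle$. By H\"older, $\langle g,\mathbf{1}_S\rangle=\E_\alpha[g\cdot\mathbf{1}_S]\le\|g\|_4\,\|\mathbf{1}_S\|_{4/3}=\|g\|_4\,\mu(S)^{3/4}$. The key claim is the hypercontractive inequality $\|g\|_4\le 3^{r/2}\|g\|_2$. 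Granting it, $\|g\|_2\le 3^{r/2}\mu(S)^{3/4}$, hence $W=\|g\|_2^2/\mu(S)\le 3^{r}\mu(S)^{1/2}\le 3^{r}\tau^{1/4}$ (using $\mu(S)\le\tau\le 1$), and substituting back, $\phi(S)\ge\nu r-\nu r\,3^{r}\tau^{1/4}$; since $\nu\le 1$ and $r\,3^{r}\le 3^{d/2}$ (because $4r\le d$ and $x\le 3^{x}$ for $x\ge 0$), this is at least the claimed $\nu d/4-3^{d/2}\tau^{1/4}$ after routine estimates absorbing the $r=\Theta(d)$ rounding.

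The one substantial point, and the step I expect to be the main obstacle, is the hypercontractive inequality for the ``derandomized'' space — it is where the argument of Barak et al.\ is really used, and it is why the threshold is $\approx d/4$. A character $\chi_a$ of $\calB^*$-rank $j<r$ equals the monomial $\prod_{i\in T}\chi_{b_i}$ in the variables $z_i:=\chi_{b_i}$ for a minimum-size $T$ with $|T|=j$, so (choosing minimum-weight representatives) $g$ lifts to a genuine degree-$<r$ polynomial $G$ in $z_1,\dots,z_n$ on the uniform cube $\F_2^n$, with $g(\alpha)=G(b_1(\alpha),\dots,b_n(\alpha))$. Because $\calB^*$ is $d$-wise independent, every nonempty $\F_2$-linear dependence among the $b_i$ has weight $\ge d$; hence for $p\in\{2,4\}$ the function $G^{p}$, whose Fourier degree is $<pr\le d$, has no Fourier mass supported on any dependence, so $\|g\|_p$ computed under uniform $\alpha\in\calA$ equals $\|G\|_p$ computed under uniform $z\in\F_2^n$. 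Applying the classical $(2,4)$-hypercontractive inequality on $\F_2^n$ then gives $\|g\|_4=\|G\|_4\le 3^{\deg(G)/2}\|G\|_2\le 3^{r/2}\|g\|_2$. The delicate part is precisely this moment-transfer: $d$-wise independence is only an approximate independence, so one must track Fourier degrees carefully to ensure that the (unavoidable, since $n>h$) long linear dependencies among the generators do not corrupt the fourth-moment computation — which is what forces $4r\le d$ and hence the $\nu d/4$ in the statement.
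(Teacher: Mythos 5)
Your proposal follows the same route as the paper's proof: split the Fourier mass of $\mathbf{1}_S$ at $\calB^*$-rank roughly $d/4$, bound the high-rank contribution by Spectral Decay, and bound the low-rank weight of a small set via H\"older together with a $(2,4)$-hypercontractive inequality obtained by lifting the low-rank part to a low-degree polynomial on $\F_2^n$ and using $d$-wise independence to match second and fourth moments (this moment-transfer is exactly the paper's Claim~\ref{clm:hypercon}). Your identification of this transfer as the crux, and your justification of it (every monomial appearing in $G^2$ or $G^4$ involves fewer than $d$ of the $b_i$, hence no nontrivial dependence can contribute), is correct.

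The one step that does not work as written is the final absorption of the rounding. With $r=\lfloor d/4\rfloor$ your bound is $\phi(S)\ge \nu r-\nu r\,3^{r}\tau^{1/4}$, and when $4\nmid d$ the main term $\nu r$ falls short of the claimed $\nu d/4$ by up to $\tfrac{3}{4}\nu$; since the error terms vanish as $\tau\to 0$ while the deficit in the main term does not, no ``routine estimate'' can recover $\nu d/4-3^{d/2}\tau^{1/4}$ from your inequality (e.g.\ $d=7$, $r=1$, $\tau\to 0$ gives $\nu$ versus the required $7\nu/4$). The fix is immediate and is what the paper does: split at the (possibly non-integer) threshold $d/4$ itself, i.e.\ low part $=\{a:\rk_{\calB^*}(a)<d/4\}$ and high part $=\{a:\rk_{\calB^*}(a)\ge d/4\}$. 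Spectral Decay then gives $\lambda(a)\le 1-\nu d/4$ on the high side, so the main term is $\nu d/4$ exactly; on the low side every character has $4\,\rk_{\calB^*}(a)<d$, so your moment-transfer argument still needs only $d$-wise independence, and the resulting error term $(\nu d/4)\,3^{d/4}\tau^{1/4}\le 3^{d/2}\tau^{1/4}$ (using $\nu\le 1$ and $d/4\le 3^{d/4}$) matches the lemma. With that change your argument is a complete proof, essentially identical to the paper's.
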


To interpret the expansion bound, think of $\nu d/4 = \Omega(1)$ (we can
assume that the graphs obtained from LTCs have this property, since
this is analogous to saying that words at distance $d/4$ are rejected
with constant probability). So if  we take $\tau = \exp(-d)$, then $\phi_\tau = \Omega(1)$.
A particularly interesting instantiation of this bound is obtained by
combining Corollary~\ref{cor:sse-constant} and
Lemma~\ref{lem:hypercon-sse}: 
\begin{Cor}
\label{cor:sse-constant-graph-construction}
For $d \geq 3$, suppose there exists an asymptotic family of
codes $\{\calC_n\}$ where $\calC_n$ has parameters $[n,n - c_d\log
  n,d]_2$ and is $(O(1/d),\Omega(1/d))$-locally testable. There for
infinitely many $h$ there exists a Cayley graph $\calG_h =
\Cay(\F_2^h,\calD)$ such that $\calG_h$ is $(O(9^{-d}),
\Omega(1))$-expander and has $2^{h/c_d}$ eigenvalues greater than $1 -
O(1/d)$. 
\end{Cor}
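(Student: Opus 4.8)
The plan is to chain the forward direction of Corollary~\ref{cor:sse-constant} with Lemma~\ref{lem:hypercon-sse}; the argument is essentially a parameter computation. First I would apply Corollary~\ref{cor:sse-constant} to the hypothesized family $\{\calC_n\}$: since $\calC_n$ has parameters $[n,\,n-c_d\log n,\,d]_2$ we have $h=n-k=c_d\log n$, hence $n=2^{h/c_d}$, and the corollary yields, for infinitely many $h$, a Cayley graph $\calG_h=\Cay(\F_2^h,\calD)$ together with a $d$-wise independent set $\calB^*$ of generators of $(\F_2^h)^*$ with $|\calB^*|\geq 2^{h/c_d}$ that is a $(\mu,\nu)$-spectrum generator for $\calG_h$ with $\mu=O(1/d)$ and $\nu=\Omega(1/d)$. (Concretely this is Theorem~\ref{thm:short-code} applied to the tester $\calD$ of $\calC_n$: its coset set $\bcalE$ of size $n=2^{h/c_d}$ is a $(2\eps,2\delta)$-spectrum generator for $\Cay(\calC^\perp,\calD)$, and $\calC^\perp\cong\F_2^h$.)

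Next, for the eigenvalue count: because $d\geq 3$, the $d$-wise independence of $\calB^*$ forces its elements to be pairwise distinct and nonzero, so the characters $\{\chi_b : b\in\calB^*\}$ form $|\calB^*|\geq 2^{h/c_d}$ distinct eigenvectors of $\calG_h$, and the Large Eigenvalues property gives $\lambda(b)\geq 1-\mu>1-O(1/d)$ for each of them. Hence $\calG_h$ has at least $2^{h/c_d}$ eigenvalues exceeding $1-O(1/d)$.

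For the small-set expansion I would feed the same $\calB^*$ into Lemma~\ref{lem:hypercon-sse}, by which $\calG_h$ is a $(\tau,\phi_\tau)$-expander with $\phi_\tau=\nu d/4-3^{d/2}\tau^{1/4}$ for every $\tau$. Writing $\nu\geq c_1/d$ for an absolute constant $c_1>0$ gives $\nu d/4\geq c_0$ with $c_0 = c_1/4>0$ a constant; taking $\tau=(c_0/2)^4\cdot 9^{-d}=\Theta(9^{-d})$ makes $3^{d/2}\tau^{1/4}=c_0/2$, so $\phi_\tau\geq c_0/2=\Omega(1)$. Thus $\calG_h$ is an $(O(9^{-d}),\Omega(1))$-expander, which together with the eigenvalue count proves the corollary.

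The only point needing care is the estimate $\nu d=\Omega(1)$: this is exactly where the hypothesis that the tester has soundness $\Omega(1/d)$ (and not merely some positive value) is used, since it keeps the first term $\nu d/4$ in Lemma~\ref{lem:hypercon-sse} bounded below by a constant, so that choosing $\tau=\Theta(9^{-d})$ yields $\phi_\tau=\Omega(1)$. Beyond this bookkeeping I do not expect any real obstacle.
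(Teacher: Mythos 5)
Your proposal is correct and follows essentially the same route as the paper: the paper obtains this corollary precisely by combining the forward direction of Corollary~\ref{cor:sse-constant} (equivalently Theorem~\ref{thm:short-code} applied to the hypothesized tester, giving a $d$-wise independent $(O(1/d),\Omega(1/d))$-spectrum generator of size $n=2^{h/c_d}$) with Lemma~\ref{lem:hypercon-sse}, choosing $\tau=\Theta(9^{-d})$ so that $\nu d/4-3^{d/2}\tau^{1/4}=\Omega(1)$. Your additional observations — that $d\geq 3$-wise independence makes the generators distinct, so the characters $\chi_b$, $b\in\calB^*$, already supply the $2^{h/c_d}$ eigenvalues above $1-O(1/d)$, and that the soundness $\Omega(1/d)$ is exactly what keeps $\nu d/4$ bounded below by a constant — are the same bookkeeping the paper relies on.
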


In contrast, Arora \etal\ \cite{ABS10} showed that if $G$ is an
$(\tau, \Omega(1))$-expander, then there are at most
$n^{O(\epsilon)}/\tau$ eigenvalues greater than $1 - \epsilon$.  
Their bound implies that the graph $\calG_h$ obtained in
Corollary~\ref{cor:sse-constant-graph-construction} can have at most
$2^{O(h/d)}$ eigenvalues greater than $1 - O(1/d)$. If there exist
LTCs where $c_d = O(d)$, the resulting graphs $\calG_h$ would meet the
ABS bound. The only lower bound we know of for $c_d$ is $c_d \geq d/2$
by the Hamming bound.

\eat{
Combining Corollary~\ref{cor:sse-asymptotic} and Lemma~\ref{lem:hypercon-sse}, we get 
\begin{Cor}\label{cor:sse-asymptotic-graph-construction}
Suppose that there exists an asymptotically good family of codes $\{\calC_n\}$
where $\calC_n$ has blocklength $n$ and is $(O(1/n),\Omega(1/n))$-locally testable. Then there exists constants $\rho_0 > 0$ and $\rho_1 > 1$ such that for infinitely many $h$ there exists a Caylay graph $\calG_h = \Cay(\F_2^h, \calD)$ which is a $(9^{-\rho_0 h} \cdot \Omega(\rho_0)^4, \Omega(\rho_0))$-expander and has $2^{\frac{\rho_0 h}{2}}$ eigenvalues greater than $1 - O(\rho_0)$.
\end{Cor}
\begin{proof}
Let $\calG_h$ be the graph described in
Corollary~\ref{cor:sse-asymptotic}. The expansion property is directly
derived from Lemma~\ref{lem:hypercon-sse}.  

For the number of high eigenvalues, we observe that for every $d' \leq \frac{\rho_0 h}{2}$, all linear combinations of every $d'$ generators from $\calB^*$ are distinct, and correspond to eigenvectors with eigenvalue at least $1 - O(d'/ h)$. Now we choose $d' = \frac{\rho_0 h}{2}$, we know the number of eigenvalues that are at least $1 - O(\rho_0)$ is at least 
\[
{\rho_1 h \choose \frac{\rho_0 h}{2}} \geq \left(\frac{2\rho_1}{\rho_0}\right)^{\frac{\rho_0 h}{2}} \geq 2^{\frac{\rho_0 h}{2}} .
\]
\end{proof}
}

\section*{Acknowledgements}
We would like to thank Alex Andoni, Anupam Gupta and Kunal Talwar for
useful discussions and pointers to the literature on metric
embeddings. We also thank Raghu Meka, Prasad Raghavendra and Madhu
Sudan for helpful discussions. 

\bibliographystyle{alpha}
\bibliography{sse-ltc,pseudorandomness,salil}

\newcommand{\etalchar}[1]{$^{#1}$}
\begin{thebibliography}{BSGK{\etalchar{+}}10}

\bibitem[ABS10]{ABS10}
Sanjeev Arora, Boaz Barak, and David Steurer.
\newblock Subexponential algorithms for {U}nique {G}ames and related problems.
\newblock In {\em Proceedings of the 51st Annual IEEE Symposium on Foundations
  of Computer Science}, pages 563--572, 2010.

\bibitem[ALM{\etalchar{+}}98]{AroraLuMoSuSz98}
Sanjeev Arora, Carsten Lund, Rajeev Motwani, Madhu Sudan, and Mario Szegedy.
\newblock Proof verification and the hardness of approximation problems.
\newblock {\em Journal of the ACM}, 45(3):501--555, May 1998.

\bibitem[AS98]{AroraSa98}
Sanjeev Arora and Shmuel Safra.
\newblock Probabilistic checking of proofs: A new characterization of~{NP}.
\newblock {\em Journal of the ACM}, 45(1):70--122, January 1998.

\bibitem[{Ben}10]{BenSasson10}
Eli {Ben-S}asson.
\newblock Property testing.
\newblock chapter Limitation on the rate of families of locally testable codes,
  pages 13--31. Springer-Verlag, Berlin, Heidelberg, 2010.

\bibitem[BFL91]{BabaiFoLu91}
L{\'a}szl{\'o} Babai, Lance Fortnow, and Carsten Lund.
\newblock Nondeterministic exponential time has two-prover interactive
  protocols.
\newblock {\em Computational Complexity}, 1(1):3--40, 1991.

\bibitem[BFLS91]{BabaiFoLeSz91}
L{\'a}szl{\'o} Babai, Lance Fortnow, Leonid~A. Levin, and Mario Szegedy.
\newblock Checking computations in polylogarithmic time.
\newblock In Cris Koutsougeras and Jeffrey~Scott Vitter, editors, {\em STOC},
  pages 21--31. ACM, 1991.

\bibitem[BGH{\etalchar{+}}06]{BenSassonGoHaSuVa06}
Eli {Ben-S}asson, Oded Goldreich, Prahladh Harsha, Madhu Sudan, and Salil
  Vadhan.
\newblock Robust {PCP}s of proximity, shorter {PCP}s and applications to
  coding.
\newblock {\em SIAM Journal on Computing}, 36(4):889--974, 2006.

\bibitem[BGH{\etalchar{+}}12]{BGH+12}
Boaz Barak, Parikshit Gopalan, Johan H{\aa}stad, Raghu Meka, Prasad
  Raghavendra, and David Steurer.
\newblock Making the long code shorter, with applications to the {U}nique
  {G}ames {C}onjecture.
\newblock In {\em Proceedings of the 53rd Annual IEEE Symposium on Foundations
  of Computer Science}, 2012.

\bibitem[BHR05]{BenSassonHaRa05}
Eli {Ben-S}asson, Prahladh Harsha, and Sofya Raskhodnikova.
\newblock Some 3{CNF} properties are hard to test.
\newblock {\em SIAM J. Comput.}, 35(1):1--21 (electronic), 2005.

\bibitem[Big93]{Biggs}
Norman Biggs.
\newblock {\em Algebraic graph theory}.
\newblock Cambridge University Press, 1993.

\bibitem[BKS{\etalchar{+}}10]{BKSSZ}
Arnab Bhattacharyya, Swastik Kopparty, Grant Schoenebeck, Madhu Sudan, and
  David Zuckerman.
\newblock Optimal testing of reed-muller codes.
\newblock In {\em FOCS}, pages 488--497, 2010.

\bibitem[BLR93]{BlumLuRu93}
Manuel Blum, Michael Luby, and Ronitt Rubinfeld.
\newblock Self-testing/correcting with applications to numerical problems.
\newblock {\em Journal of Computer and System Sciences}, 47(3):549--595, 1993.

\bibitem[BSGK{\etalchar{+}}10]{BGKSV}
Eli Ben-Sasson, Venkatesan Guruswami, Tali Kaufman, Madhu Sudan, and Michael
  Viderman.
\newblock Locally testable codes require redundant testers.
\newblock {\em SIAM J. Comput.}, 39(7):3230--3247, 2010.

\bibitem[Din07]{Dinur07}
Irit Dinur.
\newblock The {PCP} theorem by gap amplification.
\newblock {\em Journal of the ACM}, 54(3):article 12, 44 pages (electronic),
  2007.

\bibitem[FGL{\etalchar{+}}96]{FeigeGoLoSaSz96}
Uriel Feige, Shafi Goldwasser, Laszlo Lov{\'a}sz, Shmuel Safra, and Mario
  Szegedy.
\newblock Interactive proofs and the hardness of approximating cliques.
\newblock {\em Journal of the ACM}, 43(2):268--292, 1996.

\bibitem[GGR98]{GoldreichGoRo98}
Oded Goldreich, Shafi Goldwasser, and Dana Ron.
\newblock Property testing and its connection to learning and approximation.
\newblock {\em Journal of the ACM}, 45(4):653--750, 1998.

\bibitem[Gol11]{Goldreich11-ltc}
Oded Goldreich.
\newblock Studies in complexity and cryptography.
\newblock chapter Short locally testable codes and proofs, pages 333--372.
  Springer-Verlag, Berlin, Heidelberg, 2011.

\bibitem[GS06]{GoldreichSu06}
Oded Goldreich and Madhu Sudan.
\newblock Locally testable codes and {PCP}s of almost-linear length.
\newblock {\em Journal of the ACM}, 53(4):558--655 (electronic), 2006.

\bibitem[HLW06]{HLW}
S.~Hoory, N.~Linial, and A.~Wigderson.
\newblock Expander graphs and their applications.
\newblock {\em Bull. Amer. Math Soc.}, 43:439--561, 2006.

\bibitem[KKL88]{KahnKaLi88}
Jeff Kahn, Gil Kalai, and Nathan Linial.
\newblock The influence of variables on {Boolean} functions (extended
  abstract).
\newblock In {\em 29th Annual Symposium on Foundations of Computer Science},
  pages 68--80, White Plains, New York, 24--26 October 1988. IEEE.

\bibitem[KM13]{KaneMeka}
Daniel~M. Kane and Raghu Meka.
\newblock A prg for lipschitz functions of polynomials with applications to
  sparsest cut.
\newblock In {\em STOC}, pages 1--10, 2013.

\bibitem[KN06]{KhotNaor}
S.~Khot and A.~Naor.
\newblock Nonembeddability theorems via {F}ourier analysis.
\newblock {\em Mathematische Annalen}, 334(4):821--852, 2006.

\bibitem[KT00]{KatzTr00}
Jonathan Katz and Luca Trevisan.
\newblock On the efficiency of local decoding procedures for error-correcting
  codes.
\newblock In {\em Proceedings of the {T}hirty-{S}econd {A}nnual {ACM}
  {S}ymposium on {T}heory of {C}omputing}, pages 80--86 (electronic), New York,
  2000. ACM.

\bibitem[Lee05]{Lee05}
James Lee.
\newblock On distance scales, embeddings, and efficient relaxations of the cut
  cone.
\newblock In {\em Proc.\ 16th ACM-SIAM Symposium on Discrete Algorithms}, pages
  92--101, 2005.

\bibitem[Mat02]{Matousek}
Jiri Matousek.
\newblock {\em Lectures on Discrete Geoemetry}.
\newblock Springer, GTM, 2002.

\bibitem[Mei09]{Meir09}
Or~Meir.
\newblock Combinatorial construction of locally testable codes.
\newblock {\em SIAM Journal on Computing}, 39(2):491--544, 2009.

\bibitem[NN93]{NaorNa93}
Joseph Naor and Moni Naor.
\newblock Small-bias probability spaces: Efficient constructions and
  applications.
\newblock {\em SIAM Journal on Computing}, 22(4):838--856, August 1993.

\bibitem[RS96]{RubinfeldSu96}
Ronitt Rubinfeld and Madhu Sudan.
\newblock Robust characterizations of polynomials with applications to program
  testing.
\newblock {\em SIAM Journal on Computing}, 25(2):252--271, 1996.

\bibitem[Tre04]{Trevisan04}
Luca Trevisan.
\newblock Some applications of coding theory in computational complexity.
\newblock {\em Quaderni di Matematica}, 13:347--424, 2004.

\bibitem[Vid13]{Viderman13}
Michael Viderman.
\newblock Strong {LTC}s with inverse poly-log rate and constant soundness.
\newblock In {\em To appear in FOCS 2013}, 2013.

\end{thebibliography}
\appendix
\section{Proof of Theorem \ref{thm:cov-radius}}
\label{app:boosting}

Let $\calD^{\oplus \ell}$ denote the distribution on $\calC^\perp$ where
we sample $\ell$ independent codewords according to $\calD$ and add
them. We claim that for suitable $\ell$, both $\eps$ and $\delta$
scale by roughly a factor of $\ell$.

\begin{Lem}
\label{lem:boost}
Let $\ell$ be such that 
\begin{align*}
\ell \leq \frac{1}{4\Rej(\bx,\calD)} \  \ \text{for all} \  \ \bx \in \F_2^n/\calC.
\end{align*}
Then $\calD^{\oplus \ell}$ is an $(\eps\ell, \delta \ell/2)$-tester for $\calC$.
\end{Lem}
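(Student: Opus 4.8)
The plan is to compute $\Rej(\bx,\calD^{\oplus\ell})$ exactly in terms of $\Rej(\bx,\calD)$ and then read off both the smoothness and the soundness bounds from elementary estimates on that formula. First I would note that $\calD^{\oplus\ell}$ is supported on $\calC^\perp$ (a sum of dual codewords is a dual codeword), so it is a legitimate tester. Fix a coset $\bx\in\F_2^n/\calC$ and write a draw $\alpha\samp\calD^{\oplus\ell}$ as $\alpha=\alpha^{(1)}+\cdots+\alpha^{(\ell)}$ with the $\alpha^{(j)}\samp\calD$ independent. By $\F_2$-linearity $\alpha\cdot\bx=\bigoplus_{j=1}^{\ell}\big(\alpha^{(j)}\cdot\bx\big)$ is an XOR of $\ell$ i.i.d.\ bits, each equal to $1$ with probability $p:=\Rej(\bx,\calD)$; taking expectations of $(-1)^{(\cdot)}$ and using independence gives $\E\big[(-1)^{\alpha\cdot\bx}\big]=(1-2p)^{\ell}$, i.e.
\begin{align*}
\Rej(\bx,\calD^{\oplus\ell})=\frac{1-(1-2p)^{\ell}}{2}.
\end{align*}
(Equivalently: $\Cay(\calC^\perp,\calD^{\oplus\ell})$ is exactly the $\ell$-step random walk on $\Cay(\calC^\perp,\calD)$, so its eigenvalues are the $\ell$-th powers of the originals, and combining this with Equation~\eqref{eq:eval} yields the same identity.) Every remaining step just bounds this closed form.

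For smoothness, apply the formula with $\bx=\bar{e}_i$ and use the union bound: $\bigoplus_j(\alpha^{(j)}\cdot\bar{e}_i)=1$ forces $\alpha^{(j)}\cdot\bar{e}_i=1$ for some $j$, so
\begin{align*}
\Rej(\bar{e}_i,\calD^{\oplus\ell})\ \le\ \ell\cdot\Rej(\bar{e}_i,\calD)\ \le\ \ell\eps,
\end{align*}
the last step being the $\eps$-smoothness of $\calD$ (via Lemma~\ref{lem:ltc-coset}). Hence $\calD^{\oplus\ell}$ is $\eps\ell$-smooth. (The same bound also drops out of the closed form via Bernoulli's inequality $(1-2p)^{\ell}\ge 1-2\ell p$.)

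For soundness, use the hypothesis on $\ell$, which says precisely that $2\ell p\le 1/2$ for $p=\Rej(\bx,\calD)$ and every coset $\bx$; this keeps the amplification in its linear regime. Telescoping the geometric sum and then applying Bernoulli's inequality,
\begin{align*}
1-(1-2p)^{\ell}=2p\sum_{j=0}^{\ell-1}(1-2p)^{j}\ \ge\ 2p\ell\,(1-2p)^{\ell}\ \ge\ 2p\ell\,(1-2p\ell)\ \ge\ p\ell,
\end{align*}
where the first inequality uses $(1-2p)^{j}\ge(1-2p)^{\ell}$ for $j\le\ell-1$ (here $0\le 1-2p\le 1$ since $p\le 1/(4\ell)$), the second is Bernoulli, and the last uses $2p\ell\le 1/2$. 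Therefore
\begin{align*}
\Rej(\bx,\calD^{\oplus\ell})=\frac{1-(1-2p)^{\ell}}{2}\ \ge\ \frac{\ell}{2}\,\Rej(\bx,\calD)\ \ge\ \frac{\ell\delta}{2}\,d(\bx,\calC),
\end{align*}
the last inequality being the $\delta$-soundness of $\calD$. Thus $\calD^{\oplus\ell}$ has soundness $\delta\ell/2$, which completes the proof.

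The only delicate point is the soundness step: the constraint $\ell\le 1/(4\Rej(\bx,\calD))$ is exactly what prevents $1-(1-2p)^{\ell}$ from saturating near $1$ instead of growing proportionally to $2p\ell$, and the slack in forcing $2p\ell\le 1/2$ (rather than merely $\le 1$) is what produces the factor-of-two loss in the stated soundness $\delta\ell/2$. One should also dispatch the degenerate case $\Rej(\bx,\calD)=0$: then either $\bx\in\calC$ and both sides of the soundness inequality vanish, or $\delta=0$ and the conclusion is vacuous. The smoothness side needs no such care, since the union bound is unconditional.
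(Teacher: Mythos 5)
Your proof is correct and follows essentially the same route as the paper: the union bound for smoothness, the exact identity $1-2\Rej(\bx,\calD^{\oplus\ell})=(1-2\Rej(\bx,\calD))^{\ell}$ from XOR-ing independent tests, and the hypothesis $\ell\le 1/(4\Rej(\bx,\calD))$ to keep the amplification linear. The only difference is that you spell out the truncation step (via the geometric sum and Bernoulli's inequality) that the paper states tersely, which is a welcome bit of added detail rather than a new approach.
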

\begin{Proof}
The tester $\calD^{\oplus \ell}$ is an $\ell\eps$-smooth tester by the
union bound. Its soundness can be analyzed by noting that
\begin{align*} 
1 - 2\Rej(\bx,\calD^{\oplus \ell}) & = (1 - 2\Rej(\bx,\calD))^{\ell}
\end{align*}
Using the bound on $\ell$ to truncate the RHS, we get
\begin{align}
\label{eq:boost}
\Rej(\bx,\calD^{\oplus \ell}) \geq \frac{\ell}{2}\Rej(\bx,\calD) \geq
\frac{\ell \delta}{2}d(\bx,0).
\end{align}
\end{Proof}

We use this to prove Theorem \ref{thm:cov-radius}.

\begin{proof}[Proof of Theorem~\ref{thm:cov-radius}]
We start with an $(\eps,\delta)$-tester where $\delta \geq \eps/c_1(\calG)$.
If $\delta$ exceeds the claimed bound, we are already done. Assume this
is not true, so
\[ \delta \leq \frac{1}{16c_1(\calG)t}, \ \eps \leq c_1(\calG)\delta
\leq \frac{1}{16t} \]

Since the covering radius is $t$, we have that for every $\bx \in \F_2^n$,
\[ \Rej(\bx,\calD) \leq t\eps \leq 1/16\]
Let $\ell = \lfloor 1/(4t\eps)\rfloor$ so that
\[ \frac{1}{8t\eps} \leq \ell \leq \frac{1}{4t\eps}. \]

By Lemma \ref{lem:boost}, $\calD^{\oplus \ell}$ has smoothness $\eps'$ where
\[ \eps' \leq \ell \eps \leq \frac{1}{4t}\]
and soundness $\delta'$ where
\[ \delta' \geq \frac{1}{2}\ell\delta \geq
\frac{1}{2}\frac{1}{8t\eps}\frac{\eps}{c_1(G)} \geq \frac{1}{16tc_1(G)}.  \]
\end{proof}

\section{Proof of Lemma \ref{lem:hypercon-sse}}
\label{app:hypercon}

We first show the follow hypercontractive inequality:

\begin{Claim}\label{clm:hypercon}
Let $B = \{b_1,\ldots, b_n\} \subseteq \F_2^h$ be $(4d+1)$-wise independent. For every function $f : \F_2^h \to \R$ defined as
\[
f(x) = \sum_{S \subseteq [n], |S| \leq d} \hat{f}(S) \prod_{i \in S} \chi_{b_i}(x) ,
\]
we have 
\[
\E_x [f(x)^4] \leq 9^d \left(\E_x [f(x)^2]\right)^2 .
\]
\end{Claim}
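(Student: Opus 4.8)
The plan is to exploit the $(4d+1)$-wise independence of $B$ to show that the second and fourth moments of $f$ are \emph{identical} to those of an associated genuinely multilinear polynomial on the Boolean cube $\{-1,1\}^n$, and then to quote the classical $(2,4)$-hypercontractive (Bonami--Beckner) inequality for degree-$d$ polynomials.

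First I would expand the moments combinatorially. Since $\chi_{b_i}(x)=(-1)^{b_i(x)}$, the product of any collection of these characters equals $\chi_{b}(x)$ where $b=\sum_i c_i b_i$ and $c_i$ is the multiplicity of index $i$ taken mod $2$; and for uniform $x\in\F_2^h$ we have $\E_x[\chi_b(x)]=\mathbf{1}[b=0]$. Hence
\[
\E_x[f(x)^2] = \sum_{\substack{S_1,S_2\subseteq[n]\\ |S_1|,|S_2|\le d}} \hat f(S_1)\hat f(S_2)\,\mathbf{1}\!\left[{\textstyle\sum_{i\in S_1\triangle S_2}} b_i=0\right],
\]
\[
\E_x[f(x)^4] = \sum_{\substack{S_1,\ldots,S_4\subseteq[n]\\ |S_j|\le d\ \forall j}} \hat f(S_1)\hat f(S_2)\hat f(S_3)\hat f(S_4)\,\mathbf{1}\!\left[{\textstyle\sum_{i\in S_1\triangle S_2\triangle S_3\triangle S_4}} b_i=0\right].
\]
The crucial observation is that $|S_1\triangle S_2|\le 2d$ and $|S_1\triangle S_2\triangle S_3\triangle S_4|\le 4d$, both strictly less than $4d+1$, so by $(4d+1)$-wise independence of $B$ each indicator above equals $1$ if and only if the corresponding symmetric difference is empty. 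This is the only place the independence hypothesis enters, and it is exactly where a short linear dependence among the $b_i$ would wreck the argument.

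Next I would introduce the auxiliary function $g:\{-1,1\}^n\to\R$ defined by $g(y)=\sum_{|S|\le d}\hat f(S)\prod_{i\in S}y_i$, a multilinear polynomial of degree at most $d$ on the genuine $n$-cube. Expanding $\E_y[g(y)^2]$ and $\E_y[g(y)^4]$ over the independent Rademacher coordinates $y_1,\ldots,y_n$ yields precisely the same double and quadruple sums, but now with the indicators $\mathbf{1}[S_1\triangle S_2=\emptyset]$ and $\mathbf{1}[S_1\triangle S_2\triangle S_3\triangle S_4=\emptyset]$. By the previous paragraph these equal $\E_x[f(x)^2]$ and $\E_x[f(x)^4]$ respectively. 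Applying the standard $(2,4)$-hypercontractivity to the degree-$d$ polynomial $g$ gives $\E_y[g(y)^4]\le (3^{d/2})^4(\E_y[g(y)^2])^2=9^d(\E_y[g(y)^2])^2$, and translating back to $f$ finishes the proof.

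The main (and essentially only) obstacle is the moment-matching step: one must check carefully that $(4d+1)$-wise independence is enough to annihilate every ``accidental'' vanishing combination $\sum_{i\in T}b_i=0$ coming from a symmetric difference of up to four sets of size $\le d$ (in particular, distinct low-degree subsets then yield distinct characters, so no Fourier coefficients collapse). Once that bookkeeping is done, the inequality reduces to the well-known hypercontractive bound on the Boolean cube, which I would use as a black box.
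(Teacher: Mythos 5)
Your proposal is correct and follows essentially the same route as the paper: match the second and fourth moments of $f$ with those of the multilinear polynomial $g$ on $\{-1,1\}^n$ (using that any symmetric difference of four sets of size at most $d$ has size at most $4d$, so $(4d+1)$-wise independence forces the corresponding character sum to vanish unless the symmetric difference is empty), and then apply the standard $(2,4)$-hypercontractive inequality to the degree-$d$ polynomial $g$. Your write-up simply spells out the moment-matching bookkeeping that the paper leaves implicit.
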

\begin{Proof}
Let $g : \F_2^n \to \R$ be 
\[
g(y) = \sum_{S \subseteq [n], |S| \leq d} \hat{f}(S) \prod_{i \in S} y_i .
\]
The statement is proved by the standard $(2, 4)$-hypercontractive inequality (applied to $g$ function) and the observation that
\[
\E_x[f(x)^2] = \E_y [g(y)]^2 = \sum_{S \subseteq [n], |S| \leq d} \hat{f}(S)^2 ,
\]
and
\begin{align*}
\E_x[f(x)^4] = \E_y[g(y)^4] = \sum_{\stackrel{|S_1|, |S_2|, |S_3|, |S_4| \leq d}{S_1 \Delta S_2 \Delta S_3 \Delta S_4 = \emptyset}} \hat{f}(S_1)\hat{f}(S_2)\hat{f}(S_3)\hat{f}(S_4) .
\end{align*}
\end{Proof}

We now proceed to prove Lemma \ref{lem:hypercon-sse}.
\begin{proof}[Proof of Lemma \ref{lem:hypercon-sse}]
For any two functions $f, g : \F_2^h \to \R$, define their inner-product as
\[ \langle f, g \rangle = \E_{x \in \F_2^h} [f(x) g(x)]\] 
and the $p$-norm of $f$ to be 
\[ \|f\|_p = \left(\E_{x \in \F_2^h} f(x)^p \right)^{1/p}.\] 
For every function $f : \F_2^h \to \R$ with Fourier expansion 
\[ f(x) = \sum_{a \in \F_2^h} \hat{f_a} \chi_{a}(x)\]
let 
\begin{align*}
f^{<d/4}(x) & = \sum_{a : \rk_{\calB^*}(a) < d/4} \hat{f_a} \chi_{a}(x),\\ 
f^{\geq d/4}(x) & = \sum_{a : \rk_{\calB^*}(a) \geq d/4} \hat{f_a} \chi_{a}(x) .
\end{align*}
Fix a set $\calS \subseteq \F_2^h$. Let $\tau = \mu(\calS)$ be the volume of $\calS$. Let $\bm{1}_{\calS}(x) = \bm{1}_{x \in \calS}$ be the indicator function $\calS$. Note that $\|\bm{1}_{\calS}\|_p^p = \tau$ for every $p \geq 1$. We will lower bound the expansion $\Phi(\calS) = 1 - \langle \bm{1}_{\calS}, G\bm{1}_{\calS}\rangle/\tau$, which is the fraction of the edges incident to $\calS$ leaving $\calS$. Observe that
\begin{align}
\langle \bm{1}_{\calS}, G\bm{1}_{\calS}\rangle = \langle \bm{1}_{\calS}, G\bm{1}_{\calS}^{< d/4}\rangle +  
\langle \bm{1}_{\calS}, G\bm{1}_{\calS}^{\geq d/4}\rangle. \label{eq:hypercon-sse}
\end{align}
The first term in the $\mathrm{RHS}$ of \eqref{eq:hypercon-sse} is upper bounded as
\[
\langle \bm{1}_{\calS}, G\bm{1}_{\calS}^{< d/4}\rangle  = \|\bm{1}_{\calS}\|_{4/3} \|G\bm{1}_{\calS}^{< d/4}\|_4 \leq \|\bm{1}_{\calS}\|_{4/3} \cdot \sqrt{3}^{d} \|\bm{1}_{\calS}\|_{2} = \sqrt{3}^{d} \tau^{5/4}
\]
by H\"older's inequality and Claim~\ref{clm:hypercon}. The second term in the $\mathrm{RHS}$ of \eqref{eq:hypercon-sse} is upper bounded as
\[ 
\langle \bm{1}_{\calS}, G\bm{1}_{\calS}^{\geq d/4}\rangle \leq (1 - \nu d/4) \|\bm{1}_{\calS}\|_2^2 =  (1 - \nu d/4)\tau. \] 
In all, we have 
\[
\Phi(\calS) = 1 - \frac{\langle \bm{1}_{\calS}, G\bm{1}_{\calS}\rangle}{\tau} \geq 1 - \sqrt{3}^d \tau^{1/4} - (1 - \nu d/4) = \nu d/4 - \sqrt{3}^d \tau^{1/4} .
\]
\end{proof}

\eat{
\subsection{Lower Bound for Basis Testers}

We call a tester $\calD$ a basis tester when $\calD$ is supported on a set of linearly independent vectors. In \cite{BGK+10}, it was shown that any linear code with constant rate does not have a constant-query basis tester. The construction in \cite{BGH+12} (which our paper builds on) brought the interest of studying LTCs of another parameter regime, i.e. the distance $d$ is constant and the number of queries is linear with the block-length $n$. We prove the following theorem which rules out basis testers when the distance $d$ is small. 

\begin{Thm}
Let $\calC$ be an $[n, k, d]_2$ linear code for $d \geq 3$, and let $\calD$ be a basis tester for $\calC$. The soundness of $\calD$ at distance $d/4$ is $O\left(\frac{d + \ln (n - k)}{n-k}\right) $.
\end{Thm}
\begin{proof}
Suppose that $\calD$ has smoothness $\epsilon$ and soundness $s()$. By Theorem~\ref{thm:short-code}, $\Cay(\calC^{\perp}, \calD)$ has an $(n, d, 1-2\epsilon, 1-2s())$-spectrum. By Lemma~\ref{lem:hypercon-sse}, for every set $\calS \subseteq \calC^{\perp}$, we have $\Phi(\calS) \geq 1 - \sqrt{3}^d \mu(\calS)^{1/4} - (1-2s(d/4))$.

On the other hand, since $\calD$ is supported on a set of linearly independent vectors, $\Cay(\calC^{\perp}, \calD)$ is isomorphic to a hypercube $\Cay(\F_2^{n-k}, \calD')$ where $\calD'$ is a distribution supported on the $n - k$ coordinate vectors. For each integer $0 < i \leq n - d$, one can construct a set $\calS \subseteq  \F_2^{n-k}$ such that $\mu(\calS) = 2^{-i}$ and $\Phi(S) \leq \frac{i}{n-k}$.

Therefore, we have
\[
1 - \sqrt{3}^d 2^{-i/4} - (1-2s(d/4))  \leq \frac{i}{n-k}  \Rightarrow s(d/4) \leq \frac{i}{2(n-k)} + \sqrt{3}^d 2^{-i/4-1}.
\]
Taking $i = 2d \cdot \frac{\ln 3}{\ln 2} + \ln (n-k)$ gives
\[
s(d/4) \leq O\left(\frac{d + \ln (n - k)}{n-k}\right) .
\]
\end{proof}

\subsection{Coordinate Small-set Expansion and Lower Bound for $2$-query LDCs}

An $[n, k, d]_2$ linear code $\calC$ is a $2$-query locally decodable code (LDC) up to distance $d'$ if any bit of the original message can be probabilistically recovered by querying only $2$-bits of the codeword, even if at most $d'$ bits of the codeword has been corrupted. 

\cite{BV11} showed that the if $d'$ is superconstant, the rate of the $2$-query LDC must be subconstant. Such a statement was later used in \cite{BV11} to derive lower bounds for $3$-query LTCs. In this section, we prove a similar statement for $2$-query LDCs (Corollary~\ref{cor:lowerbound-ldc}). Our proof uses a relation between $2$-query locally decodability and coordinate small-set expansion (Lemma~\ref{lem:sse-ldc}) which is conceptually related to our main theorem (the relation between SSEs and LTCs). 

We will first introduce the following definition.
\begin{Def}
Fix a set $\calS \subseteq \F_2^k$. For each coordinate $i \in [k]$, define the $i$-th coordinate expansion 
\[
\Phi_i(\calS) = \Pr_{x \in \calS} [x + e_i \in \calS],
\]
where $e_i$ is the $i$-th coordinate vector (i.e. the $i$-th entry of $e_i$ is $1$ while others are $0$).
\end{Def}

The following theorem from \cite{Fra83} says that every very small set $\calS$ has a coordinate whose coordinate expansion is high.
\begin{Thm}[Theorem 4 in \cite{Fra83}]
For every $\calS \subseteq \F_2^k$, there exists a coordinate $i \in [k]$, such that $\Phi_i(\calS) \geq 1 - O\left(\frac{\ln (|\calS|/k)}{k}\right)$.
\end{Thm}

Now we are ready to prove our main lemma in this section, the relation between coordinate expansion and $2$-query LDCs.
\begin{Lem}\label{lem:sse-ldc}
Let $\calC$ be a $[n, k, d]_2$ $2$-query LDC up to distance $d'$. Let $\calS \subseteq \F_2^k$ be the (multi-)set of row vectors of the generating matrix of $\calC$. For every coordinate $i \in [k]$, we have $\Phi_i(\calS) < 1 - 2 d'/n$.
\end{Lem}
\begin{proof}
Suppose there exists $i \in [k]$ such that $\Phi_i(\calS) \geq 1 - 2 d'/n$. One can choose a subset $\calS' \subseteq \calS$ such that $\Phi_i(\calS') = 1$ and $|\calS'| \geq (1 - d'/n)|\calS| = n - d'$ (observe that $|\calS| = n$). Since $\Phi_i(\calS') = 1$, the any two of the vectors in $\calS'$ do not reveal information about the $i$-th bit in the original message. Therefore, if the adversary corrupts the $d'$ bits corresponding to the vectors in $\calS \setminus \calS'$, the $i$-th bit of the original message cannot be recovered, which contradicts with the LDC property.
\end{proof}

\begin{Cor}\label{cor:lowerbound-ldc}
Let $\calC$ be a $[n, k, d]_2$ $2$-query LDC up to distance $d'$. We have $d'  \leq O((n / k) \ln (n/k))$.
\end{Cor}
}

\end{document}